\definecolor{MyGreen}{rgb}{0.1333,0.5451,0.1333}
\renewcommand{\Pr}{\operatorname*{\textbf{\textup{Pr}}}}
\DeclareMathOperator*{\EE}{\textbf{\textup{E}}}
\renewcommand{\epsilon}{\varepsilon}
\newcommand{\set}[1]{\{ #1 \}}
\newcommand{\Set}[1]{\left\{ #1 \right\}}
\newcommand{\lb}{\left}
\newcommand{\rb}{\right}
\newcommand{\lt}{\left}
\newcommand{\rt}{\right}
\newcommand{\md}{\middle}
\newtheorem*{rep@theorem}{\rep@title}
\newcommand{\newreptheorem}[2]{%
\newenvironment{rep#1}[1]{%
\def\rep@title{#2 \ref{##1}}%
\begin{rep@theorem}[restated]}%
{\end{rep@theorem}}}
\newcommand{\onlyShort}[1]{\ifthenelse{\boolean{short}}{#1}{}}
\newcommand{\onlyLong}[1]{\ifthenelse{\boolean{short}}{}{#1}}
\theoremstyle{definition}
\newtheorem{problem}{Open Problem}
\newtheorem{definition}{Definition}
\theoremstyle{plain}
\newtheorem{lemma}{Lemma}
\newtheorem{theorem}{Theorem}
\newtheorem{corollary}{Corollary}
\newtheorem{claim}{Claim} %
\newtheorem{fact}{Fact} %
\newtheorem{open problem}{Open Problem}
\newcommand{\ann}[1]{%
\text{\footnotesize(#1)}\quad}
\title{Deterministic Lower Bounds for $k$-Edge Connectivity in the Distributed Sketching Model} 
\date{}
\author{Peter Robinson\thanks{Peter Robinson
was supported in part by National Science Foundation (NSF) grant CCF-2402836 Collaborative Research: AF: Medium: The Communication Cost of Distributed Computation}\\
\small{School of Computer \& Cyber Sciences}\\
\small{Augusta University}
\and
Ming Ming Tan\thanks{Corresponding author.
Ming Ming Tan was supported in part by National Science Foundation (NSF)
grant CCF-2348346 CRII: AF: The Impact of Knowledge on the Performance of Distributed Algorithms.} \\
\small{School of Computer \& Cyber Sciences}\\
\small{Augusta University}
}
\newcommand{\urdec}{\ensuremath{\mathsf{UR}^{\subset}_{\text{dec}}}}
\newcommand{\ur}{\ensuremath{\mathsf{UR}^{\subset}}}
\newcommand{\overlap}{\mathsf{UniqueOverlap}}
\DeclareMathOperator{\supp}{\mathsf{supp}}
\begin{document}
\maketitle
\thispagestyle{empty}
\pagestyle{empty}
\begin{abstract}
  \normalsize 
  We study the $k$-edge connectivity problem on undirected graphs in the distributed sketching model, where we have $n$ nodes and a referee.
Each node sends a single message to the referee based on its 1-hop neighborhood in the graph, and the referee must decide whether the graph is $k$-edge connected by taking into account the received messages. 

We present the first lower bound for deciding a graph connectivity problem in this model with a deterministic algorithm. 
Concretely, we show that the worst case message length is $\Omega\lt( k \rt)$ bits for $k$-edge connectivity, for any super-constant $k = O(\sqrt{n})$. 
Previously, only a lower bound of $\Omega\lt( \log^3 n \rt)$ bits was known for ($1$-edge) connectivity, due to Yu (SODA 2021). 
In fact, our result is the first super-polylogarithmic lower bound for a connectivity decision problem in the distributed graph sketching model. 

To obtain our result, we introduce a new lower bound graph construction, as well as a new communication complexity problem that we call $\overlap$.
As this problem does not appear to be amenable to reductions to existing hard problems such as set disjointness or indexing due to correlations between the inputs of the three players, we leverage results from cross-intersecting set families to prove the hardness of $\overlap$ for deterministic algorithms in the 3-party model with simultaneous messages. 

Finally, we obtain the sought lower bound for deciding $k$-edge connectivity via a novel simulation argument that, in contrast to previous works, does not introduce any probability of error and thus works for deterministic algorithms.

\end{abstract}
\newpage %
\setcounter{tocdepth}{2} %
\tableofcontents
\pagestyle{plain}
\setcounter{page}{1}

\clearpage
\section{Introduction} \label{sec:intro}

We consider the distributed graph sketching model~\cite{becker2011adding}, where $n$ distributed nodes each observe their list of neighbors in a graph.
Every node sends a single message to a central entity, called referee, who must compute a function of the graph based on the received messages.
In this setting, any graph problem can be solved trivially by instructing each one of the $n$ nodes to simply include their entire neighborhood information in the message to the referee. 
To obtain more efficient algorithms that scale to large graphs, the main goal is to keep the maximum size of these messages (also called \emph{sketches}) as small as possible, preferably polylogarithmic in $n$.
In contrast, the full list of neighbors of a node could be as large as $\Theta\lt( n \rt)$ bits, under the standard assumption that the node IDs are a permutation of $\set{1,\ldots,n}$. 
Consequently, nodes can afford to convey only incomplete information (hence the name ``sketch'') of their local neighbors to the referee, which may paint a somewhat ambiguous picture of the actual graph. 
While it may be tempting to conclude that small sketches do not allow us to solve interesting graph problems in this model, the surprising breakthrough of Ahn, Guha, and McGregor~\cite{AGM-soda12} showed that several fundamental graph problems, such as connectivity and spanning trees, can indeed be solved with sketches of only $O\lt( \log^3n \rt)$ bits, if nodes have access to shared randomness and one is willing to accept a polynomially small probability of error. 
Their technique (called ``AGM sketches'') paved the way for sketch-based graph algorithms for many other graph problems, including vertex connectivity~\cite{guha2015vertex} and approximate graph cuts~\cite{ahn2012graph}; see the survey of \cite{mcgregor2014graph} for a list of additional graph problems.\footnote{Strictly speaking, \cite{AGM-soda12} showed these results for fully dynamic graph streams in the semi-streaming model, but it is straightforward to adapt them to the distributed graph sketching model, as observed in \cite{BMRT-sirocco14}.}

\subsection{Randomized Lower Bounds for Connectivity Problems} \label{sec:rand_lbs}
The question whether AGM sketches are optimal for connectivity problems remained an open problem for several years, until the seminal work by Nelson and Yu~\cite{NY19-soda} showed that $\Omega\lt( \log^3n \rt)$ bits are indeed required for computing spanning forests in the distributed graph sketching model.
The main idea of their lower bound is to use a reduction to a variant of the universal relation problem~\cite{karchmer1995super} in 2-party communication complexity, denoted by $\ur$, where Alice starts with a subset $S$ of some universe $U$, and Bob gets a proper subset $T \subset S$. 
Alice sends a single message to Bob, who must output some element in $S\setminus T$. In \cite{kapralov2017optimal}, Kapralov, Nelson, Pachoki, Wang, and Woodruf showed that $\Omega\lt( \log^{3} n \rt)$ bits are required for solving $\ur$ with high probability. 
To obtain a reduction to $\ur$ in the distributed sketching model, \cite{NY19-soda} define a tripartite graph on vertex sets $V^l$, $V^m$, and a much smaller set $V^r$. 
Each node $v \in V^m$ has a set of unique neighbors in $V^l$ (not shared with any other node in $V^m$) as well as some neighbors in $V^r$.
Since every node in $V^l$ is connected to at most one node $v \in V^m$, any spanning forest must include some edge between $v$ and a node in $V^r$.
The neighborhood of a randomly chosen $v^* \in V^m$ is determined by the input of $\ur$ such that Alice's set $S$ corresponds to all of $v^*$'s neighbors, whereby Bob's set $T$ specifies the subset of neighbors in $V^l$. 
They show that the players can jointly simulate the nodes in $V^l$ and $V^m$ (but not $V^r$!), whereby Bob also simulates the referee.
From the resulting spanning forest, Bob can extract an edge between $v^*$ and $V^r$, which corresponds to an element in $S\setminus T$, and thus solves $\ur$.
Notably absent in their simulation are the nodes in $V^r$, which neither Alice or Bob can simulate, due to their lack of knowledge of set $S \setminus T$.
In fact, trying to faithfully simulate \emph{every} node executing a distributed sketching algorithm in a 2-party model poses a major technical challenge since every edge is shared between its neighbors. 
However, \cite{NY19-soda} show that choosing $V^r$ to be polynomially smaller than $V^m$ suffices to overcome this obstacle:
Since each $w \in V^r$ will be a neighbor of many $v \in V^m$ and $w$ does not know which of them is the important node $v^*$ (as all of these neighborhoods are sampled from the same hard input distribution for $\ur$), it follows that the amount of information that the referee can learn about the neighbors of $v^*$ from the sketches of the nodes in $V^r$ is only (roughly) $\tilde O(\frac{|V^r|}{|V^m|}) = o(1)$, and thus negligible.
Hence, Alice and Bob can omit these messages in their simulation altogether, while only causing a small increase in the overall error probability, due to Pinsker's inequality~\cite{pinsker1964information}. 
We remark that the actual lower bound construction of \cite{NY19-soda} uses multiple copies of this basic building block.

More recently, Robinson~\cite{robinson2023distributed} extended the approach of \cite{NY19-soda} to constructing a $k$-edge connected spanning subgraph ($k$-ECSS), with the main difference being a modified graph construction that enables a reduction to the multi-output variant of universal relation, called $\ur_k$, which has a lower bound of $\Omega\lt( k\log^{2} n \rt)$ bits~\cite{kapralov2017optimal} and requires Bob to find $k$ elements of $S \setminus T$.
The simulation argument in \cite{robinson2023distributed} does not directly use Pinsker's inequality as in \cite{NY19-soda}, and instead obtains a slightly smaller, but nevertheless nonzero, probability of error, by devising a mechanism of reconstructing the output when omitting the sketches of the nodes in $V^r$.
At the risk of stating the obvious, we point out that this result does not have any implications for the problem of \emph{deciding} whether a given graph is $k$-edge connected, which is the focus of the current paper. 

In~\cite{Y-soda21}, Yu proved that the decision problem of graph connectivity is subject to the same lower bound on the message length as computing a spanning forest, which fully resolved the complexity of connectivity in the distributed sketching model. 
For this purpose, \cite{Y-soda21} introduced and proved the hardness of a new decision version of universal relation, called $\urdec$.
Just as for the $\ur$ problem, Alice again gets a set $S \subseteq U$.
However, Bob not only gets $T \subset S$ but, in addition, he also knows a partition $(P_1,P_2)$ of $U \setminus T$. The inputs adhere to the promise that either $S\setminus T \subseteq P_1$ or $S\setminus T \subseteq P_2$, and, upon receiving Alice's message, Bob needs to decide which is the case. 
In order to embed an instance of $\urdec$ in a graph, \cite{Y-soda21} adapted the construction of \cite{NY19-soda} by partitioning the set $V^r$ into $V^r_1$ and $V^r_2$, and by sampling the neighborhood of each node in $V^m$ according to the hard distribution of $\urdec$. 
Following the overall simulation approach of \cite{NY19-soda}, Alice's input of $\urdec$ is embedded at some special node $v^* \in V^m$. Accordingly, the partition-promise of $\urdec$ translates to each $v \in V^m$ having all its $V^r$-neighbors in either $V^r_1$ or $V^r_2$.

We point out that there does not appear to be a natural extension of the connectivity lower bound of \cite{Y-soda21} to $k$-edge connectivity. Even though $\urdec$ can be generalized to $k>1$, proving a lower bound that scales linearly with $k$ for this generalization appears to be far from straightforward. Interestingly, the situation is reversed for the standard ``search'' version of $\ur$, where the proof of the lower bound for $\ur_k$ is technically less involved than the corresponding result for $\ur$, see \cite{kapralov2017optimal}. 
A second obstacle is that, in contrast to the problem of finding a $k$-edge connected spanning subgraph~\cite{robinson2023distributed}, designing a suitable lower bound graph for deciding $k$-edge connectivity is itself nontrivial.
This is because choosing the neighborhood of every node in $V^m$ according to the hard distribution of the generalized variant of $\urdec$ and embedding it in the constructions of \cite{NY19-soda,Y-soda21} may result in a graph that is very likely to be $k$-edge connected, independently of the neighborhood of the special node $v^*$.

\subsection{Deterministic Lower Bounds for Graph Connectivity Problems?} \label{sec:determ}
Interestingly, none of the above mentioned lower bound results provide a straightforward way towards obtaining stronger bounds for deterministic graph sketching algorithms (that never fail): 
Since any deterministic one-way algorithm for $\ur$, $\ur_k$, or $\urdec$ needs to send Alice's entire input to Bob,\footnote{While there is no published deterministic lower bound for $\urdec$, the same communication complexity bound as for the search version $\ur$ follows from standard fooling set and counting arguments for one-way communication; see, e.g., \cite{KN-book97}.} it may appear that, at first glance, we can simply ``plug in'' these results for deterministic protocols and directly obtain stronger lower bounds in the graph sketching model.
This intuition, however, turns out to be misleading, as a major technical challenge is that the simulations used in \cite{NY19-soda,Y-soda21,robinson2023distributed} are themselves randomized and, more importantly, are necessarily ``lossy'' in the sense that they introduce a nonzero probability of error, due to forgoing the simulation of some nodes---in particular, all nodes in $V^r$.
This poses a major technical obstacle towards obtaining lower bounds for algorithms that fail with exponentially small probability, including deterministic ones.\footnote{We discuss prior work on lower bounds for other graph problems in the sketching model in Section~\ref{sec:additional}.}

Obtaining lower bounds on the \emph{deterministic} complexity of distributed graph sketches for connectivity is listed under ``hard open problems'' as a ``longstanding open question'' in \cite{assadi2022lower} (p.\ 102).  Yet, there has been no progress on this class of problems to date, which is indicative of the difficulty of finding lower bounds for deterministic connectivity problems. Thus, besides the actual lower bound for $k$-connectivity, a major contribution of our work is to show that 0-error simulations are feasible in this model, which may pave the way for other deterministic lower bounds in graph sketching.

\subsection{Our Contributions}

In this work, we take the first step toward determining lower bounds for deterministic connectivity algorithms in the graph sketching model. 
In more detail, we focus on the problem of deciding $k$-edge connectivity, and we show that any deterministic sketching algorithm has a worst case sketch length that is almost linear (in $k$):
\newcommand{\thmMain}{
Every deterministic algorithm that decides $k$-edge connectivity on $n$-node graphs in the distributed sketching model has a worst case message length of $\Omega( k )$ bits, for any super-constant $k = k(n) \le \gamma\sqrt{n}$, where $\gamma>0$ is a suitable constant. %
}
\begin{theorem} \label{thm:main}
\thmMain
\end{theorem}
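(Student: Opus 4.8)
\medskip
\noindent\textbf{Proof approach.}
The plan is to prove Theorem~\ref{thm:main} by reducing a new three-party communication problem, $\overlap$, to deciding $k$-edge connectivity in the sketching model, where the deterministic communication complexity of $\overlap$ is shown elsewhere (via an extremal estimate for cross-intersecting set families) to be $\Omega(k)$ --- in fact $\Omega(k)$ bits about each of $\Theta(k)$ pieces of one player's input. The reduction rests on three pillars. First, a family of $n$-vertex graphs $G_{x}$, one per instance $x$ of $\overlap$, such that $G_{x}$ is $k$-edge connected if and only if $x$ is a ``yes''-instance. Second, a partition of the vertex set into three parts $V_{1}\sqcup V_{2}\sqcup V_{3}$, one per player, such that each player can compute, from its own slice of the input together with the public skeleton of the construction, the complete neighborhood of every vertex it owns. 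Third, an \emph{exact}, zero-error simulation of any deterministic sketching algorithm $\mathcal{A}$: each player computes the $\mathcal{A}$-sketches of its own vertices, the parties forward these sketches along the communication order of $\overlap$, and the last party --- who additionally plays the referee --- assembles all $n$ sketches and outputs $\mathcal{A}$'s answer, which reveals whether $x$ is a yes-instance. Because this is a correct protocol for $\overlap$ and the information that separates the two cases resides in the sketches of a designated set $D \subseteq V_{1}$ of $\Theta(k)$ vertices, the lower bound for $\overlap$ forces these $\Theta(k)$ sketches to have total length $\Omega(k^{2})$, hence some vertex in $D$ sends $\Omega(k)$ bits.

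\medskip
\noindent\textbf{The lower-bound graph.}
I would assemble $G_{x}$ from a ``core'' that is internally $k$-edge connected, plus $\Theta(k)$ gadget blocks and an auxiliary pool of $\Theta(k)$ vertices, so that two properties hold. (a) \emph{Every} cut of $G_{x}$ other than one prescribed ``weak'' cut incident to the designated set $D$ has at least $k$ crossing edges for \emph{every} admissible $x$; thus $k$-edge connectivity is decided solely by the weak cut. (b) The weak cut has exactly $k$ crossing edges when $x$ is a yes-instance and exactly $k-1$ when $x$ is a no-instance; here the only variable edges are those between $D$ and the auxiliary pool, and the promise carried by the other two players guarantees that the side cuts that could be spoiled by these edges stay above $k$. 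Since the core and the $\Theta(k)$ gadget blocks (each of size $\Theta(k)$) already consume $\Theta(k^{2})$ vertices, the construction uses $\Theta(k^{2})$ vertices in total, which is exactly why the hypothesis $k \le \gamma\sqrt{n}$ is needed; the super-constant assumption on $k$ guarantees both that the bound is nonvacuous and that the extremal estimates for $\overlap$ apply with growing parameters. As explained in Section~\ref{sec:rand_lbs}, neither Yu's connectivity gadget nor Robinson's $k$-ECSS gadget appears to extend to a graph with behavior (a)--(b), so this is one of the two main technical steps.

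\medskip
\noindent\textbf{The zero-error simulation and why $\overlap$ looks the way it does.}
The heart of the argument is to choose $\overlap$ and the partition $V_{1}\sqcup V_{2}\sqcup V_{3}$ so that two opposing requirements are met simultaneously. On one side, the three players must \emph{jointly} reconstruct the \emph{whole} graph without ambiguity: every edge inside a part is determined by the owning player's input, and every edge between two parts is fixed by public data, so each player computes the \emph{exact} sketches of its vertices and not a single sketch is dropped. This is what makes the simulation lossless, in stark contrast to the simulations of \cite{NY19-soda,Y-soda21,robinson2023distributed}, which omit the sketches of an entire sub-population of vertices and can therefore only bound the resulting error --- via Pinsker's inequality~\cite{pinsker1964information} or a related device --- rather than eliminate it. On the other side, the referee must be \emph{unable} to shortcut the decision: no single player may, on its own, determine the variable edges at $D$, so that the only way to learn the answer is through those vertices' (necessarily long) sketches. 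Reconciling these forces the three players' inputs to be genuinely \emph{correlated} --- for instance, the variable edge set held by one player must be consistent with the partition held by another --- which is precisely why $\overlap$ cannot be obtained by a reduction from set disjointness or indexing and why its deterministic hardness must be extracted from cross-intersecting set families directly. Granting all this, the simulation reproduces $\mathcal{A}$'s output with no error at all, and the lower bound for $\overlap$ yields a vertex with an $\Omega(k)$-bit sketch.

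\medskip
\noindent\textbf{Main obstacle.}
I expect the crux to be exactly this joint design: producing a problem $\overlap$ (with a matching graph $G_{x}$) that is at the same time \emph{simulatable} with no loss --- every inter-part edge public, every intra-part edge owned by one player --- and still \emph{hard} for deterministic protocols, while also forcing $k$-edge connectivity to collapse onto the single weak cut for \emph{all} admissible instances. The tension between losslessness (which pushes toward making more information public) and hardness (which pushes toward hiding it) is what defeats the standard reductions used in prior work and what makes the cross-intersecting machinery unavoidable; once a suitable $\overlap$ is in hand, the simulation itself should be comparatively routine, precisely because it discards nothing and hence introduces no error.
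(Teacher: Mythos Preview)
Your proposal identifies the right high-level shape --- a three-party problem, a lower-bound graph, a simulation --- and correctly pinpoints the central tension between losslessness and hardness. But the way you propose to resolve that tension is internally inconsistent, and it is precisely the resolution that constitutes the paper's main idea.

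You stipulate that ``each player can compute \ldots\ the complete neighborhood of every vertex it owns'' and that ``every edge between two parts is fixed by public data,'' while simultaneously requiring that ``no single player may, on its own, determine the variable edges at $D$.'' These cannot all hold: if $D\subseteq V_1$ and the first player knows the full neighborhoods of the vertices in $D$, then she knows the variable edges and can decide $k$-edge connectivity alone, making $\overlap$ trivial. If instead the variable edges go from $D$ to another part, they are inter-part edges and hence public by your own rule. The paper does \emph{not} give each player full knowledge of its vertices' neighborhoods. Charlie owns all of $V$ (including the special vertex $v_\sigma$), yet he does \emph{not} know the edges from $V$ to $W=A^*\cup B^*$; those are determined by Alice's and Bob's private bits. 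The trick is that Charlie can nevertheless compute the \emph{sketch} of each $v_i\in V$ without knowing its neighborhood: under the assumption $L=o(k)$, a pigeonhole/anti-concentration argument (Section~\ref{sec:separated}) produces, for a constant fraction of indices $i$, an \emph{indistinguishable separated pair} $(S_0^i,S_1^i)$ --- two possible $W$-neighborhoods on which $v_i$ outputs the \emph{same} sketch whether $v_i$ plays the role of $v_\sigma$, an $A$-restricted node, or a $B$-restricted node. Charlie knows which of these three roles $v_i$ has (from $\supp(X),\supp(Y)$) and therefore knows the sketch, even though he does not know which of $S_0^i,S_1^i$ was actually used. This ``compute the sketch, not the neighborhood'' idea is exactly what reconciles losslessness with hardness, and it is absent from your plan.

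Your parameters are also off by the wrong scale. In the paper, $\overlap_{m,s}$ has $m=\Theta(n)$ coordinates (one per vertex in $V_{\text{good}}^{\le m}$), support $s=\Theta(m)$, and deterministic lower bound $\Omega(m)=\Omega(n)$; Alice and Bob simulate the $|W|=\Theta(\sqrt{n})$ vertices of $A^*\cup B^*$, so their total communication is $L\cdot\Theta(\sqrt{n})$. The contradiction is $L\cdot\sqrt{n}=o(k\sqrt{n})=o(n)$ versus $\Omega(n)$, using $k=O(\sqrt{n})$. There is no ``designated set $D$ of $\Theta(k)$ vertices'' whose sketches total $\Omega(k^2)$; rather, a single special vertex $v_\sigma$ carries the decisive neighborhood, and the hardness comes from Alice and Bob being unable to tell which of their $\Theta(n)$ potential neighbors is $v_\sigma$.
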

The proof of Theorem~\ref{thm:main} requires us to overcome several technical challenges:
\begin{itemize} 
\item To obtain a graph that is hard for deciding $k$-edge connectivity in the distributed sketching model, we need to find a suitable graph construction.
We cannot directly extend the constructions used for deciding graph connectivity in \cite{Y-soda21}, as this crucially rests on the assumption that the neighborhood of every node in $V^{m}$ (see Sec.~\ref{sec:rand_lbs}) must be chosen from the \emph{same} distribution. 
In fact, the straightforward generalization of this idea to $k$-edge connectivity would yield a graph that is very likely to be $k$-edge connected, for large values of $k$.
In this work, we propose a new construction that does not preserve the uniformity property of \cite{Y-soda21} regarding the input distribution. 
Instead, there is a special node $v_\sigma$, with a somewhat larger degree than its peers, which we connect in a way such that the $k$-edge connectivity of the graph entirely rests on the choice of neighborhood for this node. Nevertheless, we ensure that its neighbors have no easy way of distinguishing $v_\sigma$ from the vast majority of unimportant nodes. 
This approach introduces additional technical difficulties that we address in our simulation. 

\item As elaborated in more detail in Section~\ref{sec:determ}, we need to depart from the well-trodden path of using the ``partial simulation'' argument pioneered in \cite{NY19-soda} (and also used in the work of \cite{Y-soda21} and \cite{robinson2023distributed}), where some nodes are omitted from being simulated, as this would result in a nonzero error algorithm. 
To avoid this pitfall, we design a completely different simulation that is suitable for deterministic algorithms.  
We point out that the main technical difficulty of the distributed sketching model, namely the ``sharing'' of edges between nodes, also surfaces in our setting and prevents us from devising a simulation, where every node $v$'s neighborhood is part of the input of one of the simulating players.
Instead, we exploit some structural properties of the algorithm at hand that must hold for any deterministic protocol.
In more detail, the main idea behind our approach is that it suffices to compute a sketch for $v$ that looks ``compatible'' with our (limited) knowledge of $v$'s neighborhood, and we prove that this indeed ensures an error free simulation.

\item Our simulation works in a simultaneous 3-party model of communication complexity, where two players, Alice and Bob, each send a single message to Charlie who computes the output. 
For this setting, we introduce the $\overlap$ problem, which abstracts the core difficulty present in our lower bound graph construction: 
That is, nodes are unaware which of their incident edges point to the crucial node $v_\sigma$, even though $v_\sigma$ itself is aware of its special role.
We capture this property in the communication complexity setting by equipping Alice and Bob each with an input vector that has a single common coordinate whose XOR is $1$.
While Charlie knows the location of this coordinate, he does not know the values of the respective bits of Alice and Bob, which are needed for the correct output.
The hardness of $\overlap$ does not appear to follow from reductions to standard problems such as set disjointness or variants of indexing (see Sec.~\ref{sec:overlap_rel} for a more detailed discussion), due to the correlation between the inputs of the three players and since we consider ternary input vectors, whose entries may be either $0$, $1$, or $\perp$.
A further complication is that we need to assume Charlie knows the entire support (i.e., the indices of all non-$\perp$ entries) of Alice's and Bob's inputs.
Instead, we devise a purely combinatorial argument based on properties of cross-intersecting set families to show the following result:
\end{itemize}
  
\newcommand{\thmOverlap}{
For any sufficiently large $m$, there exists an $s=\Theta\lt( m \rt)$ such that the deterministic one-way communication complexity of $\overlap_{m,s}$ in the simultaneous 3-party model is $\Omega\lt( m \rt)$, where each $m$-length input vector has support $s$. 
}
\begin{theorem} \label{thm:overlap}
\thmOverlap
\end{theorem}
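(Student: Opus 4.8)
The plan is to show that a protocol with short messages cannot, for \emph{every} pair of admissible supports, certify the unknown bit sitting at the special coordinate, and then to quantify this obstruction via an extremal bound on cross-intersecting set families. Fix a correct deterministic protocol whose communication complexity is $c$; in particular Alice's message has length at most $c$ on every input, and likewise Bob's. For an $s$-subset $S\subseteq[m]$ write $a\mapsto m_A(S,a)$ for Alice's message when her support is $S$, where $a\in\{0,1\}^S$ lists her bits on $S$, and define $m_B$ analogously for Bob. Call $i\in S$ \emph{Alice-exposed for $S$} if $a_i$ is a function of $m_A(S,a)$, and let $E^A_S\subseteq S$ collect these coordinates; define $E^B_S$ symmetrically. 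Since $m_A(S,\cdot)$ then determines the vector $(a_i)_{i\in E^A_S}$, which runs over all of $\{0,1\}^{E^A_S}$ as $a$ varies, $m_A(S,\cdot)$ takes at least $2^{|E^A_S|}$ distinct values, so $|E^A_S|\le c$, and likewise $|E^B_S|\le c$.

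The first key step is a structural lemma: for $s$-sets $A,B$ with $A\cap B=\{i^*\}$, correctness forces $i^*\in E^A_A\cup E^B_B$. Indeed, if $i^*\notin E^A_A$ there are support-$A$ inputs $x^0,x^1$ with $x^0_{i^*}=0$, $x^1_{i^*}=1$, and $m_A(x^0)=m_A(x^1)$; if in addition $i^*\notin E^B_B$, there are support-$B$ inputs $y^0,y^1$ with $y^0_{i^*}=0$, $y^1_{i^*}=1$, and $m_B(y^0)=m_B(y^1)$. Then $(x^0,y^1)$ and $(x^1,y^0)$ are both admissible $\overlap_{m,s}$ instances (each meets the XOR-$1$ promise at its unique common coordinate $i^*$), they produce identical message pairs, and Charlie---who is handed both supports---has identical side information for the two; yet the required outputs differ, contradicting correctness.

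Next I would recast this in the language of cross-intersecting families. Fix $i\in[m]$ and set $\mathcal H^A_i=\{A\setminus\{i\}: |A|=s,\ i\in A,\ i\notin E^A_A\}$, and $\mathcal H^B_i$ analogously; these are $(s-1)$-uniform families on the $(m-1)$-element ground set $[m]\setminus\{i\}$. By the structural lemma, $\mathcal H^A_i$ and $\mathcal H^B_i$ are cross-intersecting, since two disjoint members would yield $A,B$ with $A\cap B=\{i\}$ and $i$ exposed by neither player. Choosing $s=\lfloor m/3\rfloor$, so that $2(s-1)\le m-1$, a known extremal bound on cross-intersecting families (of Pyber type)---$|\mathcal F|\,|\mathcal G|\le\binom{m-2}{s-2}^2$ for $(s-1)$-uniform cross-intersecting $\mathcal F,\mathcal G$ on $m-1$ points---gives $\min(|\mathcal H^A_i|,|\mathcal H^B_i|)\le\binom{m-2}{s-2}$. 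Hence for each $i$ at least one player exposes $i$ for all but a $\frac{s-1}{m-1}$-fraction of the $\binom{m-1}{s-1}$ supports that contain $i$. By pigeonhole one player, say Alice, is this exposing player for at least $m/2$ coordinates; summing $\sum_A|E^A_A|=\sum_i|\{A: i\in E^A_A\}|$ over those coordinates and dividing by $\binom{m}{s}$ shows that $|E^A_A|$ averages at least $\frac{s(m-s)}{2(m-1)}=\Omega(m)$, so some $A$ has $|E^A_A|=\Omega(m)$ and therefore $c\ge|E^A_A|=\Omega(m)$.

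I expect the main obstacle to be establishing the structural lemma in full rigor. One has to cope simultaneously with the ternary alphabet---so that the relevant combinatorics lives on the \emph{variable} supports rather than on a fixed index set---with the simultaneous-message model, where a ``transcript'' is a pair of independently generated messages and correctness is really a statement about combinatorial boxes $S^A_\mu\times S^B_\nu$, and with Charlie being handed both full supports as side information; and all of this while keeping Alice and Bob asymmetric, which is precisely what makes the argument genuinely \emph{cross}-intersecting rather than merely intersecting (one can check that the naive ``every set meets a fixed one'' bound on cross-intersecting families is too weak to yield $\Omega(m)$ here). A secondary point is calibrating $s=\Theta(m)$ so that the extremal bound applies ($2(s-1)\le m-1$) while $s$ and $m-s$ both remain linear in $m$, which is what makes the final average $\frac{s(m-s)}{2(m-1)}$ evaluate to $\Omega(m)$.
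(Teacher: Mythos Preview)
Your argument is correct and shares the paper's core idea: for supports $A,B$ with $A\cap B=\{\sigma\}$, correctness forces at least one player's message to pin down the bit at $\sigma$, so the ``undetermined-at-$i$'' families for Alice and Bob are cross-intersecting over $[m]\setminus\{i\}$, and an extremal bound on cross-intersecting families finishes. The execution, however, differs in several places. You work with the set $E^A_S$ of indices \emph{determined} by Alice's message and bound $|E^A_S|\le c$ directly; the paper instead pigeonholes to the largest message block and counts at least $f=s-c$ \emph{flipped} indices there---these are complementary viewpoints. You invoke Pyber's product bound $|\mathcal F|\,|\mathcal G|\le\binom{m-2}{s-2}^2$ (and pass to the min), whereas the paper uses Hilton's sum bound $|\mathcal A|+|\mathcal B|\le 1+\binom{m-1}{s-1}-\binom{m-s}{s-1}$. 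Your final accounting---pigeonhole over $i$ to pin responsibility on one player, then average $|E^A_A|$ to $\frac{s(m-s)}{2(m-1)}=\Omega(m)$---replaces the paper's direct double count $\sum_i(|\mathcal R^{\text{Alice}}_i|+|\mathcal R^{\text{Bob}}_i|)\ge 2f\binom{m}{s}$ against the summed Hilton bound. Your route is arguably cleaner in that it avoids the ``largest block'' detour; the paper's Hilton route, on the other hand, handles both players in one inequality and skips the extra pigeonhole over coordinates. One minor note: your ``main obstacle'' paragraph is overly cautious---the structural lemma is already rigorous exactly as you stated it, since $(x^0,y^1)$ and $(x^1,y^0)$ are valid instances with identical message pairs and identical side information for Charlie.
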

 
\subsection{Additional Related Work} \label{sec:additional}

The distributed graph sketching model was first introduced by Becker, Matamala, Nisse, Rapaport, Suchan, and
Todinca in~\cite{becker2011adding}, where they proved the hardness of several fundamental graph problems for deterministic sketching algorithms:
In particular, they showed that problems including deciding whether the diameter is at most $3$, and whether the graph contains a certain subgraph such as a triangle or a square, require sketches of large (i.e., super-polylogarithmic) size.
Their approach works by showing that the existence of an efficient protocol for any of these problems can be leveraged for reconstructing graphs with certain properties using similar-sized sketches.
However, as pointed out in \cite{becker2011adding}, their approach does not extend to graph connectivity problems.

The model was further explored by Becker, Montealegre, Rapaport, and Todinca in \cite{BMRT-sirocco14}, where they show separations between the power of deterministic, private randomness and public randomness algorithms.
Apart from the aforementioned works of \cite{NY19-soda,Y-soda21,robinson2023distributed} on connectivity-related problems, Assadi, Kol, and Oshman~\cite{AKO-podc20} showed a lower bound of $\Omega\lb(n^{1/2-\epsilon}\rb)$ bits on the sketch size for computing a maximal independent set.
For the distributed sketching model with \emph{private} randomness, Holm, King, Thorup, Zamir, and Zwick~\cite{HKTZZ-focs19} showed that computing a spanning tree is possible with sketches of $O(\sqrt{n}\log n)$ bits.

We point out that the distributed graph sketching model is known to be equivalent to a single-round variant of the broadcast congested clique, as observed in several earlier works~\cite{jurdzinski2018communication,AKO-podc20}, where each one of $n$ nodes can broadcast a single message per round that becomes visible to all other nodes at the end of the current round.
In that context, the worst case sketch size corresponds to the available bandwidth, i.e., size of the broadcast message.
Several works have studied tradeoffs between the number of rounds and the message size in the (multi-round) broadcast congested clique:
Pai and Pemmaraju~\cite{PP20-fsttcs} showed that $\Omega\lt( \frac{\log n}{b} \rt)$ rounds are required if nodes can broadcast messages of size $b$.
The work of Montealegre and Todinca~\cite{MT16-podc} revealed that there is a deterministic $r$-round connectivity algorithm that sends messages of size $O(n^{1/r}\log n)$.
Subsequently, Jurdzinski and Nowicki~\cite{JN-disc17} obtained an improved round complexity of $O(\log n/\log \log n)$ when considering the standard assumption of $O(\log n)$ bits per message.

The related problem of computing a $k$-edge connectivity certificate has also been studied in dynamic graph streams. 
The recent work of Sawettamalya and Yu~\cite{DBLP:journals/corr/abs-2510-16336} give an algorithm that uses $O\lt( n \log^2n \cdot \max\set{k\log n\log k} \rt)$ bits of space and thus nearly matches the lower bound of $\Omega\lt( n \log^2n \cdot \max\set{k,\log n} \rt)$ established in \cite{NY19-soda} and \cite{robinson2023distributed}.

\subsection{Notation and Preliminaries} \label{sec:preliminaries}

In the distributed sketching model each node $v$ is equipped with an ID of $O(\log n)$ bits. 
We sometimes abuse notation and use $v$ to refer to either the node itself or its ID, depending on the context.
Thus, when referring to the neighborhood of a node, we also mean the list of IDs associated to the node's neighbors.

We use $[m]$ to denote the set of integers from $1$ to $m$. 
For a set $S$ and some positive integer $m$, we borrow the standard notation from extremal combinatorics  (e.g., see \cite{jukna2001extremal}) and use $\binom{S}{m}$ to denote the set family of all $m$-element subsets of $S$.

 \subsection{Roadmap}
 
We structure the proof of Theorem~\ref{thm:main} as follows:
In Section~\ref{sec:lb_graph}, we introduce our new graph construction and analyze how its properties relate to $k$-connectivity.
Then, we show in Section~\ref{sec:separated} that, for any given deterministic algorithm that sends short sketches, there exists an instance of this lower bound graph, such that many nodes send the same message for certain pairs of inputs. 
After introducing and proving the hardness of the $\overlap$ problem in Section~\ref{sec:overlap}, we show in Section~\ref{sec:simulation} how to solve it by simulating a $k$-edge connectivity algorithm on our lower bound graph.
In Section~\ref{sec:combining}, we combine these results to complete the proof of Theorem~\ref{thm:main}.
\onlyLong{Finally, we discuss some open problems in Section~\ref{sec:conclusion}.}

\onlyShort{\medskip\noindent\textbf{Omitted Proofs.} We include all omitted proofs in the attached full paper.}

\section{The Lower Bound Graph} \label{sec:lb_graph}

In this section, we define the class of lower bound graphs for proving Theorem~\ref{thm:main}.
Consider parameters $n$ and $k$, where 
\begin{align}
	\omega\lt( 1 \rt) \le k \le \gamma\sqrt{n}, \label{eq:k}
\end{align}
for some suitable constant $\gamma>0$, which, in particular, rules out $k$ being a constant.
We define the class of $n$-node graphs $\mathcal{G}_{k,n}$ on a vertex set $V \cup W \cup \set{u_A,u_B}$, such that 
\begin{align}
|V| &=n - \lt\lfloor \sqrt{n} \rt\rfloor- 2 \label{eq:sizeV}\\ 
|W| &=\lt\lfloor \sqrt{n} \rt\rfloor.  \label{eq:sizeW}
\end{align}
The set $W$ is further partitioned into vertex sets $A$ and $B$, each of size at least $k$.
Every vertex has a unique \emph{identifier} (ID), and we use any fixed arbitrary ID assignment from the set $[n]$ for all graphs in $\mathcal{G}_{k,n}$. 
Every graph $G \in \mathcal{G}$ is associated with a \emph{determining index} $\sigma \in [|V|]$.  
As we will see below, this name is motivated by the fact that the edges in the cut $E(v_\sigma,A \cup B)$ determine whether the graph is $k$-connected. 

Next, we define the edges of $G$:
\begin{enumerate}
\item[(E1)] $G[A]$ and $G[B]$ are cliques.
\item[(E2)] For every $w_i \in A$, we add the edge $\set{u_A,w_i}$,  and,  similarly, for every $w_j \in B$, we add $\set{u_B,w_j}$.
\end{enumerate}
We now define the edges in the cut $(V,W)$: 
For every $v_i \in V$ ($i \ne \sigma$), the incident edges are such that exactly one of the following two properties hold:
\begin{enumerate}
\item[(E3)] $|E(v_i,A)| \ge 1$ and $|E(v_i,B)| = 0$, or $|E(v_i,A \cup B)| =0$; and there are $k$ parallel edges between $u_A$ and $v_i$.
\item[(E4)] $|E(v_i,A)| = 0$ and $|E(v_i,B)| \ge 1$, and there are $k$ parallel edges between $u_B$ and $v_i$.
\end{enumerate}
We say that a node $v_i$ is \emph{$A$-restricted} if it satisfies (E3) and call $v_i$ \emph{$B$-restricted} if (E4) holds.
Note that (E3) holds for a node $v_i$ even in the case that there are no edges at all between $v_i$ and $A \cup B=W$.

Node $v_\sigma$, on the other hand, has exactly $2k-1$ edges to nodes in $W$, and 
\begin{enumerate}
\item[(E5)] there are $k$ parallel edges between $v_\sigma$ and $u_A$.
\end{enumerate}
Moreover, one of the following two conditions hold:
\begin{enumerate}
\item[(C0)] $|E(v_\sigma,A)| \ge k$ and $|E(v_\sigma,B)| \le k-1$;  \label{c0}
\item[(C1)] $|E(v_\sigma,A)| \le k-1$ and $|E(v_\sigma,B)| \ge k$.   \label{c1}
\end{enumerate}
We point out that (E3), (E4), and (E5) require the input graph to be a multi-graph. This can be avoided by simply replacing $u_A$ and $u_B$ by cliques of size $k$ if desired, without changing the asymptotic bounds.
Apart from the list of their neighbors, we provide additional power to the algorithm by revealing to each node $v_i$ whether it is $A$-restricted, $B$-restricted, or $v_\sigma$.
Clearly, this can only strengthen the lower bound.
\onlyLong{Figure~\ref{fig:lb}}\onlyShort{Figure~1 in the attached full paper} shows an example of a graph $G \in \mathcal{G}_{k,n}$.

\onlyLong{
\begin{figure}[t]
  \centering
\includegraphics[scale=1.0]{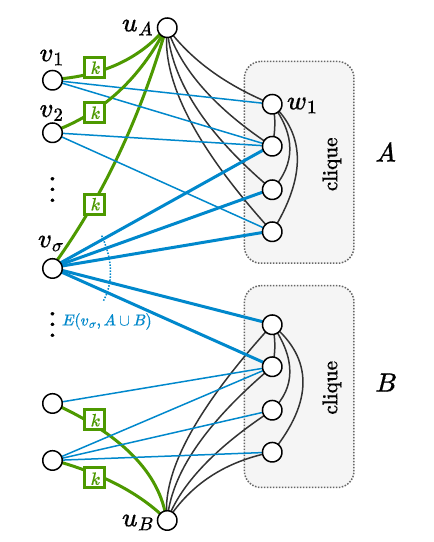}
\caption{\small
An instance of a lower bound graph for $k=3$. Notice that $v_\sigma$ is the only vertex with edges in both $A$ and $B$. In this example, condition (C0) holds, since $v_\sigma$ has $k$ edges to $A$. Thus, the graph is not $k$-edge connected.}
\label{fig:lb}
\end{figure}
}

We now state some properties of this construction.
The next lemma is immediate from the description:
\begin{lemma} \label{lem:distinguish}
Every node $v_i \in V$ is either $v_\sigma$, $A$-restricted, or $B$-restricted.
Moreover, $v_i$ knows this information as part of its initial state.
\end{lemma}

\begin{lemma} \label{lem:lb_graph}
Graph $G \in \mathcal{G}_{k,n}$ is $k$-edge connected if and only if (C1) holds.
\end{lemma}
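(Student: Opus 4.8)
The plan is to analyze the global min-cut of $G \in \mathcal{G}_{k,n}$ and show it has size at least $k$ exactly when (C1) holds. First I would record the ``easy'' direction: if (C0) holds, then $v_\sigma$ has $|E(v_\sigma, B)| \le k-1$. Consider the cut that separates $A \cup \{u_A\} \cup \{\text{all $A$-restricted nodes}\} \cup \{v_\sigma\}$ from $B \cup \{u_B\} \cup \{\text{all $B$-restricted nodes}\}$. By (E3) the $A$-restricted nodes contribute no edges across this cut (their only $W$-neighbors lie in $A$, and their $k$ parallel edges go to $u_A$), by (E4) likewise the $B$-restricted nodes contribute nothing, and by (E5) the $k$ parallel edges from $v_\sigma$ to $u_A$ stay inside. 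So the only crossing edges are $E(v_\sigma, B)$, of which there are at most $k-1$. Hence $G$ is not $k$-edge connected, which gives the contrapositive of the ``only if'' direction.

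For the converse, assume (C1) holds, so $|E(v_\sigma, B)| \ge k$ and also, by (E5), there are $k$ parallel edges from $v_\sigma$ to $u_A$. I would argue that every nonempty proper vertex subset $X \subsetneq V \cup W \cup \{u_A,u_B\}$ has at least $k$ edges leaving it. The natural approach is to exhibit, for each vertex, $k$ edge-disjoint paths to a fixed ``core'' — concretely, I would show the graph remains connected after deleting any $k-1$ edges, or, cleaner, directly bound $|E(X, \bar X)|$ by case analysis on how $X$ meets the ``hubs'' $u_A, u_B, A, B$. The building blocks: (i) $G[A]$ and $G[B]$ are cliques of size $\ge k$ (E1), so $A$ together with $u_A$ (which is joined to all of $A$, (E2)) forms a $k$-edge-connected ``gadget,'' and similarly for $B$; (ii) each $A$-restricted node is attached to the $A$-gadget by $k$ edges ($k$ parallel edges to $u_A$ in (E3), or $\ge 1$ edge into $A$ plus the rest — wait, need care here) — actually (E3) gives a node either $\ge 1$ edge to $A$ \emph{and} $k$ parallel edges to $u_A$, so at least $k$ edges total into the $A$-gadget, or it has $0$ edges to $W$ but still $k$ parallel edges to $u_A$; either way $k$ edges to the $A$-gadget. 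Symmetrically for $B$-restricted nodes via (E4). (iii) $v_\sigma$ is attached to the $A$-gadget by $k$ parallel edges (E5) \emph{and} to the $B$-gadget by $\ge k$ edges (C1). Finally, the two gadgets are themselves joined by $\ge k$ edges through $v_\sigma$: the path family from $u_A$ to $u_B$ through $v_\sigma$ uses the $k$ parallel $v_\sigma$–$u_A$ edges on one side and $k$ of the $v_\sigma$–$B$ edges (extended inside the $B$-clique to $u_B$) on the other, giving $k$ edge-disjoint $u_A$–$u_B$ paths.

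Turning this into a clean min-cut bound: let $X$ be a side of a cut. If $X$ contains neither $u_A$ nor $u_B$, then every vertex of $X$ is a node of $V$, each of which sends $k$ (parallel or otherwise) edges to $u_A$ or to $u_B$ — all leaving $X$ — so $|E(X,\bar X)| \ge k$ unless $X = \emptyset$. If $X$ contains exactly one hub, say $u_A \in X$, $u_B \notin X$: either $X$ misses some vertex of the $A$-gadget ($A \cup \{u_A\}$), and by $k$-edge-connectivity of that gadget (clique $A$ of size $\ge k$ plus $u_A$ adjacent to all of $A$) there are $\ge k$ edges of the gadget leaving $X$; or $X \supseteq A \cup \{u_A\}$, in which case I use the $k$ edge-disjoint $u_A$–$u_B$ paths established above, each of which must cross out of $X$ (since $u_B \notin X$), giving $\ge k$ crossing edges. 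The case $u_B \in X, u_A \notin X$ is symmetric, and the case $u_A, u_B \in X$ reduces to the complement $\bar X$, which contains neither hub, handled by the first case (with $\bar X \ne \emptyset$). This exhausts all cases and shows $\lambda(G) \ge k$.

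\textbf{Main obstacle.} The delicate point is handling the subcase where $X$ entirely contains one gadget but splits the other, and more generally making the ``$k$ edge-disjoint paths through $v_\sigma$'' argument fully rigorous — in particular verifying that the $\ge k$ edges $E(v_\sigma,B)$ can be routed to $u_B$ within the $B$-clique in an edge-disjoint fashion, and that these routes do not collide with edges already used to certify connectivity inside the gadget. I expect the cleanest write-up is to avoid explicit path-counting where possible and instead invoke submodularity of the cut function together with the $k$-edge-connectivity of each gadget as a black box, reducing the whole statement to checking that contracting each gadget to a single vertex leaves a multigraph in which the contracted $v_\sigma$ has degree $\ge k$ on both sides and every contracted $V$-node has degree $\ge k$ — a small finite check. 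The ``only if'' direction, by contrast, is a one-line exhibition of a bad cut and should pose no difficulty.
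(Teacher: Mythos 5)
Your decomposition is exactly the paper's: the $A$-gadget $A \cup \set{u_A}$ with the $A$-restricted nodes and $v_\sigma$ attached by $k$ parallel edges, the $B$-gadget $B \cup \set{u_B}$ with the $B$-restricted nodes attached, and $E(v_\sigma,B)$ as the only edges joining the two sides. The paper shows each side is $k$-edge connected and then concludes directly (implicitly using the standard fact that two $k$-edge-connected graphs joined by at least $k$ edges yield a $k$-edge-connected union); you instead spell out a min-cut case analysis on which of $u_A, u_B$ a cut-side $X$ contains, which is a legitimate way to make the same argument explicit.

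However, your first case is wrong as stated. When $X$ contains neither hub, you assert that ``every vertex of $X$ is a node of $V$,'' but that does not follow: $X$ may contain vertices of $A$ or $B$ without containing $u_A$ or $u_B$ (for instance $X = \set{w}$ for a single $w \in A$, or $X \subseteq A$). The bound $|E(X,\bar X)| \ge k$ still holds in these subcases, but it comes from the clique structure of $A$ (resp.\ $B$) together with the $u_A$-$A$ edges, not from the $k$ parallel hub-edges of $V$-nodes; this is precisely the part the paper covers by invoking the $k$-edge-connectivity of $G[A\cup\set{u_A}\cup V_A]$ rather than reasoning vertex by vertex. Once you add that missing subcase (or switch to your proposed contraction argument, which would avoid the pitfall entirely and is a clean alternative to the paper's phrasing), the ``if'' direction goes through; the ``only if'' direction you give matches the paper's cut and is correct.
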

\begin{proof}
Let $V_A \subseteq V$ be the subset of nodes that are $A$-restricted, i.e., satisfy (E3), and also include $v_\sigma$ in $V_A$. 
Similarly, we define $V_B \subseteq V$ to contain all $B$-restricted nodes, which satisfy (E4).
Every $v_j\in V_A$ has $k$ parallel edges to $u_A$, which itself has an edge to each node in $A$.
Since $A$ is a clique of size at least $k$, it follows that there are at least $k$ edge-disjoint paths between every pair of nodes in $G[A \cup \set{u_A} \cup V_A]$, which implies that $G[A \cup \set{u_A} \cup V_A]$ is $k$-edge connected. A similar argument shows that $B$, $u_B$, and $V_B$ induce a $k$-edge connected subgraph. 
By properties (E1)-(E5), any edge in the cut $(A\cup \set{u_A} \cup V_A,B\cup \set{u_B} \cup V_B)$ must be in $E(v_\sigma, B)$.
Thus, the graph is $k$-edge connected if and only if Case (C1) occurs.
\end{proof}

\section{Existence of Indistinguishable Separated Pairs} \label{sec:separated}

As our overall goal is to prove Theorem~\ref{thm:main}, we assume towards a contradiction that each node sends at most $L=o\lt( k \rt)$ bits throughout this section.

For a node $v_i \in V$, we call the subset of its neighbors in $W$ its \emph{$W$-neighborhood}.
Here, our main focus will be to show that there is a way to choose the partition $\set{A,B}$ of $W$ that has properties suitable for our simulation in Section~\ref{sec:simulation}.

\begin{definition}[Separated Pair] \label{def:sep_pair}
We say that two $W$-neighborhoods \emph{$S_0$ and $S_1$ form a separated pair for $v_\sigma$ with respect to $A$ and $B$}, when the following are true:
\begin{itemize} 
\item if $v_\sigma$'s $W$-neighborhood corresponds to $S_0$, then condition (C0) holds. In other words, $|S_0 \cap A| \ge k$ and $|S_0 \cap  B| \le k-1.$
\item if $v_\sigma$'s $W$-neighborhood is given by $S_1$, then condition (C1) holds. Thus, $|S_1 \cap A| \le k-1$ and $|S_1 \cap  B| \ge k.$
\end{itemize}
\end{definition}  
Hence, if $v_\sigma$'s $W$-neighborhood is $S_1$, the graph is $k$-edge connected  whereas if it is $S_0$, then the graph is not $k$-edge connected. 

Since node $v_\sigma$ has exactly $2k-1$ neighbors in $W$ according to our lower bound graph description, we know that the set of all $W$-neighborhoods of $v_\sigma$ is of size $\binom{|W|}{2k-1}$.
For technical reasons, however, we need to ensure that any pair of possible $W$-neighborhoods of $v_\sigma$ have an intersection that is at most a small constant fraction of $k$. %
The proof of the next lemma follows via the probabilistic method and similar to Lemma~6 in \cite{kapralov2017optimal}.\footnote{This is Lemma~7 in the full version of their paper on arXiv~\cite{kapralov2017optimalArXiv}.}
We include a full proof in \onlyLong{Appendix~\ref{app:lem:S} for completeness.}\onlyShort{in the appendix of the attached full paper for completeness.}

\newcommand{\lemS}{%
Consider a set family $W$, any small $\epsilon>0$, and any positive integer $d \le \frac{\epsilon |W|}{2e^{2+4/\epsilon}}$.
There exists a set family $\mathcal{S} \subseteq \binom{W}{d}$ such that, for any two distinct $S_1,S_2 \in \mathcal{S}$, we have $|S_1 \cap S_2| \le \frac{\epsilon d}{2}$, and $|\mathcal{S}| \ge  e^{d-1}$. %
}
\begin{lemma} %
\label{lem:S}
\lemS
\end{lemma}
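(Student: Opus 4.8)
\textbf{Proof plan for Lemma~\ref{lem:S}.}

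The plan is to use the probabilistic method, closely following the structure of Lemma~6 in \cite{kapralov2017optimal}. First I would fix $N = \lceil e^{d-1} \rceil$ and draw sets $S_1, \ldots, S_N$ independently and uniformly at random from $\binom{W}{d}$. The goal is to show that with positive probability no two of them have intersection exceeding $\epsilon d / 2$; deleting one set from each ``bad'' pair would then still leave $\ge e^{d-1}$ sets, but actually the cleaner route is to show the expected number of bad pairs is less than $N/2$ (or simply $< 1$ with a slightly smaller $N$), so that after removing at most one endpoint per bad pair we retain the claimed size. I would carry out whichever bookkeeping is cleanest; the essential content is the tail bound on a single intersection.

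The core computation is to bound $\Pr\lt[ |S_i \cap S_j| > \tfrac{\epsilon d}{2} \rt]$ for a fixed pair $i \ne j$. Conditioning on $S_j$, the set $S_i$ is a uniform $d$-subset of $W$, so $|S_i \cap S_j|$ is hypergeometric with mean $d^2/|W|$, which by the hypothesis $d \le \frac{\epsilon |W|}{2 e^{2 + 4/\epsilon}}$ is much smaller than $\epsilon d / 2$. I would bound the probability that the intersection is at least $t := \lceil \epsilon d / 2 \rceil$ by a union bound over the $\binom{d}{t}$ choices of which elements of $S_j$ land in $S_i$, each occurring with probability at most $(d/|W|)^t$ (crudely, picking those $t$ elements among the $d$ draws). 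This gives
\begin{align}
\Pr\lt[ |S_i \cap S_j| \ge t \rt] \le \binom{d}{t} \lt( \frac{d}{|W|} \rt)^t \le \lt( \frac{e d}{t} \rt)^t \lt( \frac{d}{|W|} \rt)^t = \lt( \frac{e d^2}{t |W|} \rt)^t. \nonumber
\end{align}
Plugging in $t \ge \epsilon d / 2$ and $d/|W| \le \frac{\epsilon}{2 e^{2+4/\epsilon}}$, the base becomes $\frac{e d}{t}\cdot\frac{d}{|W|} \le \frac{e d}{\epsilon d / 2} \cdot \frac{\epsilon}{2 e^{2+4/\epsilon}} = \frac{2e}{\epsilon}\cdot\frac{\epsilon}{2e^{2+4/\epsilon}} = e^{-1-4/\epsilon}$, so the whole expression is at most $\lt( e^{-1-4/\epsilon} \rt)^{\epsilon d / 2} = e^{-\epsilon d/2 - 2d} \le e^{-2d}$. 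Then the expected number of bad pairs is at most $\binom{N}{2} e^{-2d} \le \frac{N^2}{2} e^{-2d}$, and with $N = \lceil e^{d-1}\rceil \le e^{d}$ (for $d\ge 1$) this is at most $\frac{e^{2d}}{2} e^{-2d} = \frac12 < 1$ — actually $< N/2$ comfortably — so by the probabilistic method there is a realization with no bad pair (or, via the deletion argument, one keeping $\ge N - N/2 \ge e^{d-1}$ sets after removing at most one endpoint per bad pair). Taking $\mathcal{S}$ to be this family completes the proof.

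The main obstacle, such as it is, is purely the arithmetic of making the constants line up: one has to be careful that the ceiling in $t = \lceil \epsilon d / 2 \rceil$ and the rounding in $N$ do not erode the slack, and that the combinatorial estimate $\binom{d}{t} (d/|W|)^t$ is a legitimate upper bound for the hypergeometric tail (it is, since one can expose the $d$ draws of $S_i$ one at a time and the conditional probability of hitting $S_j$ is always at most $d/|W|$, even after some hits, because $|S_j| = d$ and we only ever remove elements from the available pool). Everything else is routine. I would also note explicitly that the hypothesis $d \le \frac{\epsilon|W|}{2e^{2+4/\epsilon}}$ is exactly what is needed for the base of the exponent to drop below $e^{-1}$, which is where the clean bound $|\mathcal{S}| \ge e^{d-1}$ comes from.
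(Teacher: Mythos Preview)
Your proof is correct and follows the same probabilistic-method scheme as the paper's: sample $N$ independent uniform $d$-subsets of $W$, bound the probability that a fixed pair over-intersects, and union-bound over all $\binom{N}{2}$ pairs. The only real difference is the tail estimate itself—the paper invokes a Chernoff inequality via negative dependence of the hit-indicators, whereas you use the more elementary bound $\Pr[T\subseteq S_i]\le (d/|W|)^{|T|}$ together with a union bound over $T\in\binom{S_j}{t}$; this sidesteps the negative-dependence argument and makes the arithmetic slightly cleaner, but the two routes are otherwise interchangeable.
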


Instantiating Lemma~\ref{lem:S} with $d=2k-1$ and a suitable small constant $\epsilon>0$, shows that
\begin{align}
	|\mathcal{S}| \ge e^{2k-2}.
\label{eq:S}
\end{align}

Given an arbitrary set $D$, we say that $\{\mathcal{B}_1, \ldots, \mathcal{B}_\ell\}$ is a \emph{partition $\mathcal{P}$ of $D$} and each set $\mathcal{B}_i \subseteq D$ is called a \emph{block of $\mathcal{P}$}, if $\bigcup_{i=1}^{\ell} \mathcal{B}_i = D$ and all blocks are pairwise disjoint.  
For a fixed partition $\set{A,B}$ of $W$ and any $S \in \mathcal{S}$, we define the \emph{projections} $\phi_A(S) = S \cap A$ and $\phi_B(S) = S \cap B$, and we also define the resulting image sets $\Phi_{A}=\Set{ \phi_A(S) \ \md|\ S \in \mathcal{S}}$ and $\Phi_{B}=\Set{ \phi_B(S) \ \md|\ S \in \mathcal{S}}$.
Consider the algorithm executed by some node $v_i$.
Let us assume that, for the time being, $\sigma=i$, and that node $v_i$ is equipped with ID $i$.
Together with the node IDs in its neighborhood, this fully determines the message $v_i$ sends to the referee. 
In particular, $v_i$ can send at most $2^{L}$ distinct messages, and hence the given algorithm induces a partition $\mathcal{P}_{i,\sigma}$ of $\mathcal{S}$ into at most $2^{L}$ distinct \emph{blocks}, where each block is a set of $W$-neighborhoods for which $v_i$ sends the same message to Charlie. %
Let $\mathcal{B}_{i,\sigma}$ denote the largest one of these blocks, and observe that  
	$\mathcal{B}_{i,\sigma} \ge \frac{|\mathcal{S}|}{2^L}.$
We emphasize that the $\mathcal{B}_{i,\sigma}$ depends on the ID $i$ of $v_i$; it may be the case that the resulting set $\mathcal{B}_{j,\sigma}$ contains a completely different set of neighborhoods for node $v_j$ ($i \ne j$). 
Recall from Lemma~\ref{lem:distinguish} that a node $v_i \in V$ may deduce whether $\sigma = i$ by inspecting its degree and, consequently, the algorithm at $v_i$ may behave differently when $\sigma \ne i$ compared to the case where $v_i$ is either $A$-restricted or $B$-restricted.
That is, for a given partition $\set{A,B}$ of $W$, the message sent by any $A$-restricted node $v_i \in V$ ($i\ne\sigma$) induces a partition $\mathcal{P}_{i,A}$ on $\Phi_A$, whereas the message sent by a $B$-restricted node $v_j$ induces a partition $\mathcal{P}_{j,B}$ on $\Phi_B$. 

The next definition places some additional restrictions on a separated pair with respect to the three partitions $\mathcal{P}_{i,\sigma}$, $\mathcal{P}_{i,A}$, and $\mathcal{P}_{i,B}$, which we will leverage in Section~\ref{sec:simulation}.

\begin{definition}[Indistinguishable Separated Pair] \label{def:indist_sep_pair}
We say that \emph{$S_0$ and $S_1$ form an indistinguishable separated pair for $v_i$}, if the following properties hold:
    \begin{enumerate}
        \item[(i)]$S_0$ and $S_1$ are in the same block of $\mathcal{P}_{i,\sigma}$. \label{item:sigma}
        \item[(ii)] $\phi_A(S_0)$ and $\phi_A(S_1)$ are distinct and in the same block of $\mathcal{P}_{i,A}$, and \label{item:A}
        \item[(iii)] $\phi_B(S_0)$ and $\phi_B(S_1)$ are distinct and in the same block of $\mathcal{P}_{i,B}$. \label{item:B}
        \item[(iv)] $S_0$ and $S_1$ form a separated pair for $v_i$. %
        \label{item:size}
    \end{enumerate}
\end{definition}

\begin{lemma} \label{lem:constant_prob}
Suppose that we obtain the partition $\set{A,B}$  by assigning each vertex in $W$ uniformly at random to either $A$ or $B$.
    Consider any $v_i \in V$.
    Then, with probability at least $\frac{3}{4}$, there exist distinct $S_0,S_1 \in \mathcal{S}$ that form an indistinguishable separated pair for $v_i$. 
\end{lemma}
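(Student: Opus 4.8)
The plan is to fix $v_i \in V$ and analyze the random partition $\{A,B\}$ of $W$, showing each of the four properties in Definition~\ref{def:indist_sep_pair} holds with probability close to $1$, then union bound. The key quantitative input is that $|\mathcal{S}| \ge e^{2k-2}$ by \eqref{eq:S} while each node sends at most $2^L$ messages with $L = o(k)$, so every block $\mathcal{B}_{i,\sigma}$ of $\mathcal{P}_{i,\sigma}$ that is largest has size at least $|\mathcal{S}|/2^L \ge e^{2k-2}/2^{o(k)}$, which is still exponentially large in $k$. Restricting attention to the set $\mathcal{B}_{i,\sigma}$ already gives property (i) for free: any pair $S_0, S_1 \in \mathcal{B}_{i,\sigma}$ lies in the same block of $\mathcal{P}_{i,\sigma}$. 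So it remains to find, inside this large block, a pair that also satisfies (ii), (iii), (iv), with the randomness over the partition.

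First I would handle property (iv), the separated-pair condition. For a fixed $S \in \mathcal{S}$ with $|S| = 2k-1$, assigning each element of $W$ independently to $A$ or $B$, the quantity $|S \cap A|$ is $\mathrm{Binomial}(2k-1, 1/2)$, so with constant probability bounded away from $0$ we have $|S \cap A| \ge k$ and $|S \cap B| \le k-1$ (this is essentially asking the binomial to land in its upper half, probability $\approx 1/2$); likewise the complementary event. To get a separated pair we need two sets $S_0, S_1$ landing in the opposite halves. The natural approach: pick $S_0 \ne S_1$ in $\mathcal{B}_{i,\sigma}$, and use anti-concentration/independence to argue the joint event $\{|S_0 \cap A| \ge k\} \cap \{|S_1 \cap B| \ge k\}$ has constant probability. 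Since $|S_0 \cap S_1| \le \epsilon(2k-1)/2$ by Lemma~\ref{lem:S}, the two counts are ``almost independent'' — the overlap contributes only an $O(\epsilon k)$ perturbation — so a Chernoff/Berry–Esseen-type estimate gives that both $|S_0 \cap A|$ and $|S_1 \cap B|$ exceed $k$ with probability $\ge c$ for an absolute constant $c > 0$. (Alternatively one does not fix the pair in advance but argues that among the exponentially many sets in $\mathcal{B}_{i,\sigma}$, with high probability some lands in each half, then picks one from each.)

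Next, properties (ii) and (iii): I need $\phi_A(S_0) \ne \phi_A(S_1)$ and these two projections in the same block of $\mathcal{P}_{i,A}$, and symmetrically for $B$. Distinctness of the projections is easy: since $S_0 \ne S_1$ and their symmetric difference has size $\ge$ something like $2k-1 - \epsilon(2k-1) = \Omega(k)$ (again by the small-intersection property), at least a constant fraction of the symmetric difference lands in $A$ with overwhelming probability, so $\phi_A(S_0) \ne \phi_A(S_1)$ w.h.p., similarly for $B$. The more delicate part is forcing $\phi_A(S_0), \phi_A(S_1)$ into the same block of $\mathcal{P}_{i,A}$: the partition $\mathcal{P}_{i,A}$ is on $\Phi_A$ and also has at most $2^L$ blocks, so again one of its blocks contains at least a $2^{-L}$-fraction of $\Phi_A$. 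The strategy is a counting/pigeonhole over $\mathcal{B}_{i,\sigma}$: restrict to the sub-block of $\mathcal{B}_{i,\sigma}$ whose $A$-projections all lie in the same $\mathcal{P}_{i,A}$-block and whose $B$-projections all lie in the same $\mathcal{P}_{i,B}$-block — this costs a factor $2^L \cdot 2^L = 2^{2L}$, leaving $|\mathcal{S}| / 2^{3L} \ge e^{2k-2}/2^{o(k)}$ sets, still exponentially many — and then pick $S_0, S_1$ from within this refined block so that, additionally, they form a separated pair (as in the previous paragraph). Since all four containments are now guaranteed by construction once such a pair exists, the only place randomness is really needed is to ensure the refined block still contains a pair realizing the separated-pair condition (iv), i.e. one set with $|S \cap A| \ge k$ and one with $|S \cap B| \ge k$.

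I expect the main obstacle to be cleanly combining the pigeonhole refinement (which depends on the partition $\{A,B\}$, since $\Phi_A$, $\Phi_B$, $\mathcal{P}_{i,A}$, $\mathcal{P}_{i,B}$ are all defined relative to the partition) with the probabilistic argument for property (iv). The subtlety: $\mathcal{B}_{i,\sigma}$ is fixed (independent of $\{A,B\}$), but the refinement into same-$A$-block / same-$B$-block sub-blocks is a random refinement. One must argue that after this random refinement the surviving sub-block, which still has size $\ge e^{2k-2}/2^{3L}$, contains at least one set $S$ with $|S \cap A| \ge k$ and at least one set $S'$ with $|S' \cap B| \ge k$ — and that this happens with probability $\ge 3/4$. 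The natural way: condition on the realized partition, observe that an exponentially large collection of $(2k-1)$-sets with pairwise intersection $\le \epsilon(2k-1)/2$ behaves enough like an independent family that, with overwhelming probability over $\{A,B\}$, the fraction landing in the ``$A$-heavy'' side is bounded below and similarly the ``$B$-heavy'' side, so both survive the refinement; then a final union bound over the $O(1)$ bad events (projections coincide, no $A$-heavy survivor, no $B$-heavy survivor) keeps the total failure probability below $1/4$. Pinning down the precise constants in the anti-concentration estimates, and making sure the small-intersection slack $\epsilon$ is chosen small enough relative to these constants, is where the bulk of the routine-but-careful work lies.
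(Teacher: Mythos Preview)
Your plan follows essentially the same architecture as the paper: lift $\mathcal{P}_{i,A}$ and $\mathcal{P}_{i,B}$ to partitions of $\mathcal{S}$, pigeonhole across all three partitions to extract a common block $\mathcal{T}$ of size at least $|\mathcal{S}|^{1/4}\ge e^{(k-1)/2}$ (the paper states this as a small iterated-pigeonhole claim), and then argue probabilistically that some pair inside $\mathcal{T}$ is separated. For the last step the paper pairs up $\mathcal{T}$ arbitrarily and, for each pair $(S_0,S_1)$, uses the anti-concentration bound $\Pr[X\ge \ell/2+t]\ge \tfrac{1}{15}e^{-16t^2/\ell}$ for a sum of $\ell$ fair coins (applied to $|(S_1\setminus S_0)\cap B|$) to get a per-pair success probability that, multiplied over the $e^{\Theta(k)}$ pairs, drives the failure probability below $1/4$; the circularity you correctly flag---that $\mathcal{T}$ depends on the random partition---is present in the paper as well and is not addressed explicitly.
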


\begin{proof}
We start by defining additional partitions $\tilde{\mathcal{P}}_{i,A}$ and $\tilde{\mathcal{P}}_{i,B}$ that we obtain from $\mathcal{P}_{i,A}$ and $\mathcal{P}_{i,B}$, respectively, as follows:
For each block $\mathcal{B} \in \mathcal{P}_{i,A}$, we define 
\[
\phi^{-1}_A(\mathcal{B}) = \bigcup_{T \in \mathcal{B}} \{S \in \mathcal{S}  \mid \phi_A(S) = T\}.
\]
Partition $\tilde{\mathcal{P}}_{i,A}$ has exactly the same number of blocks as $\mathcal{P}_{i,A}$, and we obtain $\tilde{\mathcal{B}} \in \tilde{\mathcal{P}}_{i,A}$ from $\mathcal{B} \in \mathcal{P}_{i,A}$ by defining it as the set containing all elements in $\phi^{-1}_A(\mathcal{B})$.
Analogously, we construct $\tilde{\mathcal{P}}_{i,B}$, which will have the same number of blocks as $\mathcal{P}_{i,B}$, and, for each $\mathcal{B}'\in \mathcal{P}_{i,B}$, we get block $\tilde{\mathcal{B}'} \in \tilde{\mathcal{P}}_{i,B}$, which contains all elements in $\phi^{-1}_B(\mathcal{B}') = \bigcup_{T \in \mathcal{B}'} \{S \in \mathcal{S}  \mid \phi_B(S) = T\}$.
Thus, the number of blocks in $\tilde{\mathcal{P}}_{i,A}$, $\tilde{\mathcal{P}}_{i,B}$, and $\mathcal{P}_{i,\sigma}$ is at most $2^L$, and all three of them partition the set $\mathcal{S}$ given by Lemma~\ref{lem:S}.

To show the existence of the sought sets $S_0$ and $S_1$, we need the following combinatorial statement:

\begin{claim}\label{cl:partition}
    Let $D$ be a set of size $s$, and let $\mathcal{P}_1, \ldots, \mathcal{P}_r$ be $r$ partitions of $D$, such that the number of blocks in each of the partitions is at most $n' < s^{{1}/{r+1}}$. 
    Then, for every $i \in [r]$, there exists a block $B_i$ in $\mathcal{P}_i$ that contains at least $s^{1-i/(r+1)}$ elements that are in the same block in each of the partitions $\mathcal{P}_1, \ldots, \mathcal{P}_{i}$.
\end{claim}
\begin{proof}[Proof of Claim~\ref{cl:partition}]
    The case $i=1$ is immediate by the pigeonhole principle. 
    For the inductive step, assume that the claim is true for some $i<r$. That is, there is some block $B_i \in \mathcal{P}_i$ containing at least $s^{1-i/(r+1)}$ elements that are also in the same block in each of the partitions $P_1, \ldots, P_{i}$. 
    Again, by the pigeonhole principle, we conclude that there exists a block $B_{i+1}\in \mathcal{P}_{i+1}$ that has at least $s^{1-(i+1)/(r+1)}$ elements of $B_i$, as required.
\end{proof}

Instantiating Claim~\ref{cl:partition} with $n' = 2^{L} = 2^{o(k)}$ and partitions $\tilde{\mathcal{P}}_{i,A}$, $\tilde{\mathcal{P}}_{i,B}$, and $\mathcal{P}_{i,\sigma}$, it follows that there exists a set $\mathcal{T} \subseteq \mathcal{S}$, such that all sets in $\mathcal{T}$ are in the same block in all three partitions. 
Moreover, \eqref{eq:S} tells us that 
\begin{align*}
|\mathcal{T}|\ge |\mathcal{S}|^{1/4} \ge e^{(k-1)/2}.
\end{align*}
Without loss of generality, assume that $|\mathcal{T}|$ is even, and pair up the sets in $\mathcal{T}$ in some arbitrary way. 
Let $\hat{\mathcal{T}}$ denote this set of pairs, and observe that 
\begin{align}
|\hat{\mathcal{T}}| \ge  e^{(k-1)/2 - \log 2}. \label{eq:that}
\end{align}
Consider any pair $(S_0,S_1) \in \hat{\mathcal{T}}$, and note that we have already shown that Property~(i) of Definition~\ref{def:indist_sep_pair} holds.
To show Properties~(ii) and (iii), suppose that $S_0,S_1 \in \tilde{\mathcal{B}}_A \in \tilde{\mathcal{P}}_{i,A}$.
By construction of $\tilde{\mathcal{P}}_{i,A}$, we know that there exist some $T_0, T_1 \in \mathcal{B}_A$ where $\mathcal{B}_A$ is a block in $\mathcal{P}_{i,A}$, such that $T_0=\phi_A(S_0)$ and $T_1 = \phi_A(S_1)$, and, analogously, it follows that $\phi_B(S_0)$ and $\phi_B(S_1)$ are in a common block of $\mathcal{P}_{i,B}$. 

To complete the proof, we need to bound the probability that $S_0$ and $S_1$ form a separated pair. 
Note that this will also imply $\phi_A(S_0) \ne \phi_A(S_1)$ and  $\phi_B(S_0) \ne \phi_B(S_1)$, by Definition~\ref{def:sep_pair} (see page~\pageref{def:sep_pair}).

Since $|S_0|=|S_1|=2k-1$, it must be true that either $|S_0 \cap A| \ge k$ or $|S_0 \cap B| \ge k$.
Without loss of generality, we can assume the former inequality holds---otherwise, simply exchange $S_0$ and $S_1$. 
Recalling that $S_0,S_1 \in \mathcal{S}$, we know that $|S_0 \cap S_1| \le \epsilon k$, which means that, conditioned on $|S_0 \cap A| \ge k$, there are still at least 
\begin{align}
|S_1 \setminus  S_0| \ge 2k-1-\lceil \epsilon k \rceil
\end{align}
elements of $S_1$, whose random assignment is not fixed by the conditioning.
Let $\ell$ be the largest even integer such that 
$\ell \le 2k-1- \lt\lceil \epsilon k  \rt\rceil,$ and note that
\begin{align}
k(2- \epsilon)-2  \le \ell \le k(2-\epsilon)-1.  \label{eq:ell}
\end{align}
Fix some order of the elements in $S_1  \setminus S_0$, and define $Z_i$ ($i \in [\ell]$) to be the indicator random variable that is $1$ if and only if the $i$-th element is in $B$.
Let $Z = \sum_{i=1}^{\ell}Z_i$.
Our goal is to show that $|(S_1 \setminus  S_0) \cap B| \ge Z \ge k$.
In the worst case, all $\lt\lceil \epsilon k \rt\rceil$ elements in $S_1 \cap S_0$ lie in $A$, and hence $\EE\lt[ Z \ \md|\ |S_0 \cap A| \ge k \rt] < k$.
Therefore, we need to make use of the following anti-concentration result to show that there is a sufficiently large probability for $Z\ge k$: 
\begin{lemma}[Proposition~7.3.2 in \cite{matouvsek2001probabilistic}] \label{lem:helper}
    Let $\ell$ be even, $X_1, \ldots, X_{\ell}$ be independent random variables, where each takes the values of $0$ and $1$ with probability $\frac{1}{2}.$ 
    Let $X = X_1 + X_2 + \ldots X_{\ell}$. 
    For any integer $t \in [\ell/8]$, it holds that
    $\Pr \left[X \geq \frac{\ell}{2}+ t \right] \geq \frac{1}{15} e^{-16t^2/\ell}.$
\end{lemma}
To apply Lemma~\ref{lem:helper}, define $t$ to be an integer such that 
\begin{align}
\frac{ \epsilon k +1}{2} \le t \le \frac{ \epsilon k + 4}{2}, \label{eq:t}
\end{align}
and note that $t < \ell/8$ holds, as long as $\epsilon \leq \frac{2}{5}- \frac{18}{5k}$ since $k = \omega(1)$.
By Lemma~\ref{lem:helper}, we have 

\begin{align}
  \Pr\Big[ Z \ge k \ \Big|\ |S_0 \cap A| \ge k\Big]
  &\ge
  \Pr\Big[ Z \ge \frac{\ell}{2} + t \ \Big|\ |S_0 \cap A| \ge k\Big]  \notag\\ 
  \ann{by Lem.~\ref{lem:helper}, and $t<\ell/8$}
  \notag 
  &\ge
  \frac{1}{15}\exp\lt( -
  \frac{\ell}{4}\rt)
  \notag\\ 
  \ann{by \eqref{eq:ell}} 
  \notag
  &\ge
  \frac{1}{15}\exp\lt( -
  \frac{k(2-\epsilon)}{4}\rt)
  \notag
\end{align} 

Recalling \eqref{eq:that}, it follows that the probability that none of the pairs in $\hat{\mathcal{T}}$ is a separated pair is at most  
\begin{align}
    \lt(1- \frac{e^{-\frac{k(2-\epsilon)}{4}
    }}{15}\rt) ^{e^{\frac{k-1}{2}-\log 2}}
  \notag
  \le
    \exp \lt (-
    \frac{e^{-\frac{k(2-\epsilon)}{4}+\frac{k-1}{2}-\log 2}}{15}
    \rt )
  \notag 
  \le
  \exp \lt ( - \frac{e^{\frac{\epsilon k}{4}-2}}
  {15} \rt )
\end{align}
which is at most $\frac{1}{4}$ for sufficiently large $n$. 

It follows that one of the pairs in $\hat{\mathcal{T}}$, all of which satisfy Properties~(i)-(iii), is a separated pair with probability at least $\frac{3}{4}$, which shows Property~(iv) and completes the proof of Lemma~\ref{lem:constant_prob}.
\end{proof}

Recall that $|V|=\Theta\lt( n \rt)$.
For a randomly chosen partition $\set{A,B}$ of $W$, Lemma~\ref{lem:constant_prob} implies that a constant fraction of the nodes $v_i \in V$ has indistinguishable separated pairs in expectation.
Therefore, there must be some concrete choice $\set{A^*,B^*}$ for $\set{A,B}$ that ensures these properties.
We are now ready to summarize the main result of this section while recalling the assumed upper bound on the message size: 

\begin{corollary} \label{cor:exists}
For a given $k$-edge connectivity algorithm, let $V_{\text{good}} \subseteq V$ be the set of nodes for which there exist indistinguishable separated pairs (see Def.~\ref{def:indist_sep_pair} on page~\pageref{def:indist_sep_pair}). 
If each node sends at most $o(k)$ bits, then there exists a partition $\set{A^*,B^*}$ of $W$ such that $|V_{\text{good}}| \ge \frac{3}{4}|V|$.
\end{corollary}

\section{The $\overlap$ Problem} \label{sec:overlap}

We now introduce a communication complexity problem in a 3-party model with simultaneous messages, whose hardness turns out to be crucial for obtaining a lower bound for  deciding $k$-edge connectivity in the sketching model. 

In the $\overlap_{m,s}$ problem, we are given two positive integers $m$ and $s$, where $s \le \lceil m /2 \rceil$, and there are three players Alice, Bob, and Charlie. %
We focus on one-way communication protocols, where Alice and Bob each send a single message to Charlie, who outputs the result.
Alice's input is a ternary vector of length $m$, denoted by $X \in \set{0,1,\perp}^m$, and, similarly, Bob gets a vector $Y \in \set{0,1,\perp}^m$. 
We use $\supp(Z)$ to denote the support of a vector $Z$, i.e., the set of indices in $[m]$, for which $Z_i \ne \perp$.
Charlie does not know anything about $X$ and $Y$, except $\supp(X)$ and $\supp(Y)$.

\begin{definition}[Valid Instance for $\overlap_{m,s}$] \label{def:valid}
The tuple $(X,Y)$ is a \emph{valid instance for $\overlap_{m,s}$}, if $|X| = |Y| = s$ and the following properties hold:
\begin{enumerate} 
\item[(P1)] There is exactly one index $\sigma \in [m]$ such that $X_\sigma \oplus Y_\sigma = 1$, where $\oplus$ denotes the XOR operator.
\item[(P2)] For every $i \in [m] \setminus  \set{\sigma}$, it holds that $X_i = \perp$ or $Y_i = \perp$.
\end{enumerate}
\end{definition}
In conjunction, Properties~(P1) and (P2) ensure that $|\supp(X) \cap \supp(Y)| =1$, and thus Charlie can deduce the value of $\sigma$ directly from his input.
To solve the problem, Alice and Bob each send a single message to Charlie, who must output ``yes'' if $X_\sigma = 0$ and $Y_\sigma = 1$, and ``no'' otherwise.
See \onlyLong{Figure~\ref{fig:overlap}}\onlyShort{Figure~2 in the attached full paper} for an example.
\onlyLong{
\begin{figure}[t]
  \centering
\includegraphics[scale=1.0]{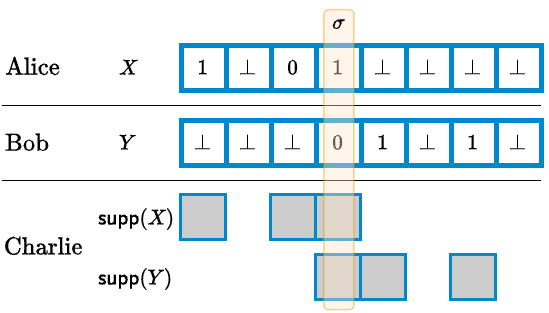}
\caption{\small
An instance of $\overlap_{8,3}$. Since $(X_\sigma,Y_\sigma)=(1,0)$, Charlie needs to answer ``no''.}
\label{fig:overlap}
\end{figure}
}

\subsection{Relationship to Other Problems in Communication Complexity.} \label{sec:overlap_rel}
In \cite{ashvinkumar2023evaluating}, the authors define the Leave-One Index problem (Problem 3 in \cite{ashvinkumar2023evaluating}), where there are three players Alice, Bob, and Charlie. 
Alice and Bob hold the same string $x \in \set{0,1}^N$ and Charlie knows an index $i^* \in [N]$. 
In addition, Alice has a set $S \subseteq [N]$ and Bob has a set $T \subseteq [N]$ with the property that $S \cup T = [N] \setminus  \set{i^*}$, and the goal is that the players output $x_{i^*}$, whereby the order of communication is Alice, Bob, and then Charlie. 
We point out that Leave-One Index lacks an important requirement of the $\overlap$ problem, namely that Charlie also knows the exact support sets (not just their size) of the sets $S$ and $T$ given to Alice and Bob, which is crucial for the correctness of our simulation. Thus, it is unclear whether there exists a reduction between these problems.  

The uniqueness of the shared index $\sigma$ may give the impression, at first, that $\overlap$ is related to a (promise) variant of computing the set intersection. 
However, we doubt the existence of a straightforward reduction, as, for the $\overlap$ problem, the input vectors of Alice and Bob are ternary, and Charlie knows the (single) intersecting index as well as the support of the two input sets.

\subsection{Hardness of the $\overlap$ Problem} \label{sec:overlap_lb}

We now show that either Alice or Bob needs to send a message of size $\Omega\lt( m \rt)$ bits to Charlie in the worst case.

\begin{reptheorem}{thm:overlap}
\thmOverlap
\end{reptheorem}
\begin{proof}
Consider any protocol for $\overlap_{m,s}$ and let $C$ be the worst case length of any message sent.
Assume towards a contradiction that 
\begin{align}
C \le \lt\lfloor \frac{m}{6} \rt\rfloor -1.  \label{eq:C}
\end{align}

Suppose that we fix Alice's support to be a set $I$ of $C+f$ indices from $[m]$, where $f$ is an integer parameter described below. 
We can partitions her $2^{C+f}$ possible input vectors into $2^{C}$ blocks such that the $i$-th block contains all inputs on which Alice sends the number $i$ to Charlie. 
Clearly, the largest block $\mathcal{B}$ in this partition contains at least $2^{f}$ inputs. 
Observe that there must exist a set $F_\text{Alice}(I) \subseteq I$ of $f$ indices in $I$ such that, for each $i \in F_\text{Alice}(I)$, there are inputs $X,\hat{X} \in \mathcal{B}$ such that $X_i \ne \hat{X}_i$, i.e., Alice sends the same message for two inputs in which the bit at index $i$ is $0$ and $1$, respectively.
We say that index $i$ is \emph{flipped for $I$}, and we refer to $F_\text{Alice}(I)$ as \emph{Alice's flipped indices for support $I$}.
Similarly, we can identify a set of flipped indices $F_{\text{Bob}}(I)$ for Bob by partitioning all possible input vectors with support $I$ into blocks. 
Note that $F_{\text{Alice}}(I)$ and $F_{\text{Bob}}(I)$ need not be the same since Alice and Bob can execute different algorithms. 

A crucial observation is that the algorithm fails, if there exist support sets $I_1$ and $I_2$ such that 
\begin{enumerate} 
\item[(a)] $|I_1 \cap I_2| = 1$, and 
\item[(b)] $\lt( I_1 \cap I_2 \rt) \subseteq \lt( F_{\text{Alice}}(I_1) \cap F_{\text{Bob}}(I_2) \rt)$.
\end{enumerate}
These conditions imply that the important index $\sigma \in I_1 \cap I_2$ is flipped for Alice's as well as Bob's support. 
In more detail, this means that there are two inputs $X$ and $\hat{X}$ for Alice that are in the same block $\mathcal{B}_{\text{Alice}}$, where $X_\sigma\ne \hat{X}_\sigma$ and Alice sends the same message $\pi(\mathcal{B}_{\text{Alice}})$, and there are two inputs $Y$ and $\hat{Y}$ in the same block $\mathcal{B}_{\text{Bob}}$, where $Y_\sigma\ne\hat{Y}_\sigma$, for which Bob  sends $\pi(\mathcal{B}_{\text{Bob}})$, thus causing Charlie to fail in one of the (two) valid input combinations.

We show that if an algorithm manages to avoid (a) and (b) for all possible pairs of support sets, then it must send at least $C = \Omega\lt( m \rt)$ bits.
Let $\mathcal{I} = \binom{[m]}{s}$ be the set family of all possible $s$-element support sets for the inputs of Alice and Bob.
For each $i \in [m]$, define 
\[
\mathcal{R}_i^{\text{Alice}} = \Set{ I \in \mathcal{I}  \mid  i \in F_\text{Alice}(I) },
\] 
which means that $\mathcal{R}_i^{\text{Alice}}$ contains every support set 
$I$ where $i$ is a flipped index for $I$ of Alice, and we define  $\mathcal{R}_i^{\text{Bob}}$ analogously. 

For the rest of the proof, if a statement holds for both 
$\mathcal{R}_i^{\text{Alice}}$ and $\mathcal{R}_i^{\text{Bob}}$, we simply omit the superscript and write $\mathcal{R}_i$ instead.  
Recall that each support $I \in \mathcal{I}$ contains at least $f$ flipped indices, i.e., $|F_\text{Alice}(I)| \ge f$ and $|F_\text{Bob}(I)| \ge f$. 
This tells us that $I$ must be a member of at least $f$ of the $\mathcal{R}_i$ set families, and thus
\begin{align}\label{eq:sumR}
    \sum_{i=1}^{m} |\mathcal{R}_i^{\text{Alice}}| + |\mathcal{R}_i^{\text{Bob}}| \ge 2f\binom{m}{r}.
\end{align}

Consider $I_1 \in \mathcal{R}_i^{\text{Alice}}$ and $I_2 \in \mathcal{R}_i^{\text{Bob}}$.
The only way to avoid (a) and (b) is if $I_1$ and $I_2$ are not valid choices for the support of Alice's input $X$ and Bob's input $Y$, respectively (see Def.~\ref{def:valid}), which is the case if $|I_1 \cap I_2| \geq 2$; (recall that $I_1 \cap I_2 =\emptyset$ is impossible since $i \in I_1\cap I_2$).
For any $\mathcal{R}_i$, we can obtain a new set family $\hat{\mathcal{R}}_i$ by removing the index $i$ from each set in $\mathcal{R}_i$, i.e.,
\begin{align}
\hat{\mathcal{R}}_i = \Set{ I' \in \binom{[m]}{s-1}  \mid  \exists I \in \mathcal{R}_i\colon I = I' \cup \set{i}}.
\notag
\end{align}
The correctness of the assumed protocol implies the following:
\begin{fact}
Set families $\hat{\mathcal{R}}_i^{\text{Alice}}$ and $\hat{\mathcal{R}}_i^{\text{Bob}}$ form a cross-intersecting family, which means that
\begin{align}
\forall I_1' \in \hat{\mathcal{R}}_i^{\text{Alice}}\ \forall I_2' \in \hat{\mathcal{R}}_i^{\text{Bob}}\colon |I_1' \cap I_2'| \geq 1. 
\end{align}
\end{fact}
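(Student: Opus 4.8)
The excerpt has already reduced the theorem to a statement about cross-intersecting set families, so my plan is to finish along those lines. Two things remain: proving the Fact that $\hat{\mathcal{R}}_i^{\text{Alice}}$ and $\hat{\mathcal{R}}_i^{\text{Bob}}$ form a cross-intersecting pair, and then converting that into the sought bound on $C$ via a counting argument against \eqref{eq:sumR}. For the Fact I would argue by contraposition: if there were disjoint $I_1' \in \hat{\mathcal{R}}_i^{\text{Alice}}$ and $I_2' \in \hat{\mathcal{R}}_i^{\text{Bob}}$, then the $s$-sets $I_1 = I_1' \cup \{i\}$ and $I_2 = I_2' \cup \{i\}$ would satisfy $I_1 \cap I_2 = \{i\}$, establishing condition~(a), while $i \in F_{\text{Alice}}(I_1) \cap F_{\text{Bob}}(I_2)$ holds by the very definition of $\mathcal{R}_i^{\text{Alice}}$ and $\mathcal{R}_i^{\text{Bob}}$, establishing condition~(b). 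By the earlier observation, (a) and (b) together force the protocol to give the same output on the "yes" instance and the "no" instance that share special index $\sigma = i$, contradicting correctness; hence no such disjoint pair exists.

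Next I would fix the support size to $s = \lceil m/2 \rceil$ for the whole argument; this is the $\Theta(m)$ value the theorem asks for, and it is a legal parameter since $s \le \lceil m/2 \rceil$. The point of this choice is that each $\hat{\mathcal{R}}_i$ is a family of $(s-1)$-subsets of the $(m-1)$-element ground set $[m] \setminus \{i\}$, and $m - 1 \ge 2(s-1)$, so I can invoke a classical bound on cross-intersecting uniform families (see, e.g., \cite{jukna2001extremal}): whenever $\mathcal{A}, \mathcal{B} \subseteq \binom{[N]}{t}$ are cross-intersecting and $N \ge 2t$, then $|\mathcal{A}| + |\mathcal{B}| \le \binom{N}{t} - \binom{N-t}{t} + 1 \le \binom{N}{t}$, where the last inequality uses $\binom{N-t}{t} \ge 1$ (which holds here since $N-t = m-s \ge s-1 = t$); and the weaker bound $\binom{N}{t}$ is trivially true if one of the families happens to be empty. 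Applying this with $N = m-1$, $t = s-1$, and using the size-preserving bijection $I \mapsto I \setminus \{i\}$ between $\mathcal{R}_i$ and $\hat{\mathcal{R}}_i$, I get $|\mathcal{R}_i^{\text{Alice}}| + |\mathcal{R}_i^{\text{Bob}}| \le \binom{m-1}{s-1}$ for every $i \in [m]$.

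Summing this over all $i$ and combining with \eqref{eq:sumR} gives
\[
2f\binom{m}{s} \;\le\; \sum_{i=1}^{m} \Big( |\mathcal{R}_i^{\text{Alice}}| + |\mathcal{R}_i^{\text{Bob}}| \Big) \;\le\; m\binom{m-1}{s-1} \;=\; s\binom{m}{s},
\]
where the last step uses $m\binom{m-1}{s-1} = s\binom{m}{s}$; hence $f \le s/2$. But the contradiction hypothesis \eqref{eq:C} yields $f = s - C \ge \lceil m/2 \rceil - \lfloor m/6 \rfloor + 1 \ge m/3 + 1$, whereas $s/2 \le (m+1)/4$, and $m/3 + 1 \le (m+1)/4$ is impossible. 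This contradicts \eqref{eq:C}, so $C \ge \lfloor m/6 \rfloor = \Omega(m)$, proving the theorem with $s = \lceil m/2 \rceil = \Theta(m)$.

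I expect the only genuinely delicate point to be the calibration of $s$: it must be large enough that the lower bound $f = s - C$ on the number of flipped coordinates provably exceeds the upper bound $s/2$ produced by the extremal inequality (any $s$ bounded away from $m/3$ from above works), yet small enough that $m - 1 \ge 2(s-1)$ so the cross-intersecting bound applies at all and $s \le \lceil m/2 \rceil$ so the instance is a legal $\overlap_{m,s}$ instance; the choice $s = \lceil m/2 \rceil$ satisfies all of these comfortably. Conceptually the heavier lifting — recognizing that the barrier to a short protocol is exactly a pair of cross-intersecting families, and exposing it by deleting the shared coordinate $i$ — has already been carried out in the excerpt, so from this point the argument is essentially a counting exercise layered on top of an off-the-shelf extremal theorem.
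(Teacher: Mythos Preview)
Your proof of the Fact itself is correct and essentially identical to the paper's: the paper argues (in the sentence just before stating the Fact) that correctness forces $|I_1 \cap I_2| \ge 2$ for every $I_1 \in \mathcal{R}_i^{\text{Alice}}$ and $I_2 \in \mathcal{R}_i^{\text{Bob}}$, and then removing the common element $i$ yields the cross-intersecting property. Your contrapositive phrasing is the same argument.

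You then go on to finish Theorem~\ref{thm:overlap}, and here your route differs slightly from the paper's. The paper applies the full Hilton bound (Lemma~\ref{lem:cross}), $|\hat{\mathcal{R}}_i^{\text{Alice}}| + |\hat{\mathcal{R}}_i^{\text{Bob}}| \le 1 + \binom{m-1}{s-1} - \binom{m-s}{s-1}$, fixes the free parameter as $f = \lfloor m/3 \rfloor + 1$ (so that $s = C + f \le m/2$ under the contradiction hypothesis), and compares $2f\binom{m}{s}$ against $m$ times the Hilton bound. You instead pin down $s = \lceil m/2 \rceil$ at the outset, throw away the $\binom{m-s}{s-1}$ term to get the cruder estimate $|\hat{\mathcal{R}}_i^{\text{Alice}}| + |\hat{\mathcal{R}}_i^{\text{Bob}}| \le \binom{m-1}{s-1}$, and use the identity $m\binom{m-1}{s-1} = s\binom{m}{s}$ to collapse everything to $2f \le s$. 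Your version has the virtue that $s$ is fixed independently of the protocol rather than emerging from the assumed value of $C$, and it shows that the full strength of the Hilton inequality is not needed; the paper's version keeps the sharper extremal bound visible. Both correctly yield $C = \Omega(m)$.
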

Next, we employ an upper bound on the sum of the cardinalities of any two cross-intersecting set families: 
\begin{lemma}[\cite{hilton1967some}] \label{lem:cross}
Let $\mathcal{A}$ and $\mathcal{B}$ be such that each is a family of $\ell$-element sets on the same underlying set of size $p$ and assume that $\mathcal{A}$ and $\mathcal{B}$ are cross-intersecting. If $p \geq 2\ell$, then  $|\mathcal{A}|+|\mathcal{B}| \leq 1+ \binom{p}{\ell} - \binom{p-\ell}{\ell}.$
\end{lemma}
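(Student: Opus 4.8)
The plan is to prove the claim by complementing twice, reducing it to a lower bound on an iterated shadow, and then invoking the Kruskal--Katona theorem. Throughout I assume both $\mathcal{A}$ and $\mathcal{B}$ are nonempty --- a hypothesis that is genuinely necessary, since $\mathcal{A}=\binom{[p]}{\ell}$ and $\mathcal{B}=\emptyset$ form a vacuously cross-intersecting pair violating the bound whenever $p>2\ell$; in the application to $\overlap$ this case does not arise (or the weaker estimate $\binom{m-1}{s-1}$ is used instead). The case $p=2\ell$ is immediate, so assume $p>2\ell$.

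\emph{Step 1: reduction to a ``co-shadow'' inequality.} For $\mathcal{F}\subseteq\binom{[p]}{\ell}$, let $\nabla(\mathcal{F})=\{S\in\binom{[p]}{p-\ell}:A\subseteq S\text{ for some }A\in\mathcal{F}\}$ be its upper shadow at level $p-\ell$ (well defined since $p-\ell\ge\ell$). Being cross-intersecting means no $B\in\mathcal{B}$ is disjoint from any $A\in\mathcal{A}$, i.e.\ $[p]\setminus B$ contains no member of $\mathcal{A}$; since $B\mapsto[p]\setminus B$ bijects $\binom{[p]}{\ell}$ with $\binom{[p]}{p-\ell}$, this says $\{[p]\setminus B:B\in\mathcal{B}\}$ is disjoint from $\nabla(\mathcal{A})$, so $|\mathcal{B}|\le\binom{p}{\ell}-|\nabla(\mathcal{A})|$. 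Hence $|\mathcal{A}|+|\mathcal{B}|\le|\mathcal{A}|+\binom{p}{\ell}-|\nabla(\mathcal{A})|$, and it suffices to prove the co-shadow inequality
\[
|\nabla(\mathcal{A})|\ \ge\ |\mathcal{A}|+\binom{p-\ell}{\ell}-1\qquad\text{for every nonempty }\mathcal{A}\subseteq\binom{[p]}{\ell}.
\]

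\emph{Step 2: from the co-shadow to an ordinary shadow, and Kruskal--Katona.} Set $\mathcal{G}=\{[p]\setminus A:A\in\mathcal{A}\}\subseteq\binom{[p]}{p-\ell}$, so $|\mathcal{G}|=|\mathcal{A}|$; the same complementation bijection carries $\nabla(\mathcal{A})$ onto the lower shadow of $\mathcal{G}$ iterated down to level $\ell$, whence $|\nabla(\mathcal{A})|$ equals the size of that shadow. Applying the Kruskal--Katona theorem repeatedly (and using that iterated shadows of colex-initial segments are again colex-initial), the colex-initial segment of $\binom{[p]}{p-\ell}$ of cardinality $|\mathcal{A}|$ minimizes this iterated shadow, which gives an explicit lower bound on $|\nabla(\mathcal{A})|$ depending only on $N:=|\mathcal{A}|$. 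It then remains to verify that this bound is $\ge N+\binom{p-\ell}{\ell}-1$ for all $N$ in the admissible range $1\le N\le\binom{p}{\ell}-\binom{p-\ell}{\ell}$ (the upper end holding because $\mathcal{A}\subseteq\{A:A\cap B_0\ne\emptyset\}$ for any fixed $B_0\in\mathcal{B}$).

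\emph{The main obstacle} is exactly this last numerical verification. I would expand $N$ in its $(p-\ell)$-cascade representation, apply Kruskal--Katona term by term on the way down to level $\ell$, and show that the resulting ``defect'' $\bigl(\text{shadow bound}(N)\bigr)-N-\binom{p-\ell}{\ell}+1$ is nonnegative by checking that it is monotone in $N$ between consecutive binomial breakpoints and nonnegative at the breakpoints. The two extreme cases --- $N=1$ (so $\mathcal{A}$ is a single $\ell$-set) and $\mathcal{A}$ a full star --- can be checked by hand and already produce equality, which pins down the additive constant $+1$; note that the continuous, real-exponent form of Kruskal--Katona is too lossy to recover this sharp constant, so the exact cascade form is essential here. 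As a fallback I would keep Katona's cyclic-permutation method in reserve: on a $p$-cycle, two cross-intersecting families of length-$\ell$ arcs have total size at most $2\ell$ once both are nonempty, and one double-counts (cyclic order, arc) pairs --- but the cyclic orders contributing no arc from one of the families are delicate and would again be handled via the shadow bound above.
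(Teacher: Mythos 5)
The paper never proves this lemma --- it is imported verbatim from Hilton--Milner \cite{hilton1967some} --- so your attempt has to stand on its own, and as written it does not yet. The good parts: your reduction is sound (cross-intersection says the complements of the sets in $\mathcal{B}$ avoid the upper shadow $\nabla(\mathcal{A})$ at level $p-\ell$, giving $|\mathcal{A}|+|\mathcal{B}|\le |\mathcal{A}|+\binom{p}{\ell}-|\nabla(\mathcal{A})|$), and your observation that both families must be nonempty is a genuine catch --- the lemma as stated in the paper omits this hypothesis, which the cited theorem does require. However, the claim closing your Step 1, that $|\nabla(\mathcal{A})|\ge|\mathcal{A}|+\binom{p-\ell}{\ell}-1$ holds for \emph{every} nonempty $\mathcal{A}$, is false (take $\mathcal{A}=\binom{[p]}{\ell}$); it only has a chance once you restrict to $|\mathcal{A}|\le\binom{p}{\ell}-\binom{p-\ell}{\ell}$, which you do only later, so the reduction must carry that restriction from the start.

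The more serious issue is that the step carrying all the weight --- verifying that the Kruskal--Katona cascade bound on the iterated shadow dominates $N+\binom{p-\ell}{\ell}-1$ throughout the admissible range $1\le N\le\binom{p}{\ell}-\binom{p-\ell}{\ell}$ --- is exactly the technical content of Hilton's lemma, and you explicitly defer it. Moreover, the mechanism you propose (the defect is monotone in $N$ between consecutive binomial breakpoints) is not right even in small cases: for $p=5$, $\ell=2$ the defect along colex initial segments of the complement family reads $0,1,1,0,1,1,0$ as $N$ runs from $1$ to $7$, so it oscillates and attains equality at interior points (the full star) as well as at the extremes; any correct verification has to work with the exact cascade form and a more careful induction, and your Katona-circle fallback is likewise only sketched and, as you note, loops back to the same shadow estimate. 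So the proposal is a reasonable plan along one of the standard routes to this theorem, but the decisive numerical lemma is missing; either carry out that cascade argument in full or do what the paper does and cite the cross-intersecting theorem of \cite{hilton1967some} directly (with the nonemptiness hypothesis made explicit, and a check that it is satisfied where the lemma is applied).
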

Choosing $f = \lt\lfloor \frac{m}{3} \rt\rfloor +1$ and recalling \eqref{eq:C}, implies that 
\begin{align}
s = C + f  \le \frac{m}{2}, \label{eq:scf}
\end{align}
and therefore $2(s-1) \leq m-1$.
Thus, we can apply Lemma~\ref{lem:cross} with parameters $p=m-1$ and $\ell = s-1$ to obtain
\begin{align}\label{eq:sumR'}
    |\hat{\mathcal{R}}_i^{\text{Alice}}| +  |\hat{\mathcal{R}}_i^{\text{Bob}}| \leq 1+ \binom{m-1}{s-1}-\binom{m-s}{s-1}.
\end{align}
Since $|\hat{\mathcal{R}}_i| = |\mathcal{R}_i|$, it follows from \eqref{eq:sumR} and \eqref{eq:sumR'} that 
\begin{align} \label{eq:combine}
    2f\binom{m}{s} \leq 1+ \binom{m-1}{s-1}-\binom{m-s}{s-1}.
\end{align}
However, it also holds that
\begin{align}
    2f\binom{m}{s} 
    & = 2\lt( \lt\lfloor \frac{m}{3}  \rt\rfloor +1 \rt) \frac{m}{s} \binom{m-1}{s-1}\notag \\
    \ann{by \eqref{eq:scf}}
    & \ge 4\lt( \lt( \frac{m}{3} - 1 \rt) + 1 \rt)  \binom{m-1}{s-1}\notag \\
    & = \frac{4}{3} m \binom{m-1}{s-1} \notag \\
    & > m \left(1+ \binom{m-1}{s-1}\right), \notag 
\end{align}
which provides a contradiction to \eqref{eq:combine}, thus completing the proof of Theorem~\ref{thm:overlap}.
\end{proof}

\subsection{A Deterministic Algorithm for $\overlap$} \label{sec:algo}

The argument developed for our lower bound proof in Section~\ref{sec:overlap_lb} inspires the design of a simple deterministic algorithm that allows Alice and Bob to save a single bit by sending messages of length $s-1$ under certain conditions.
As this is not needed for proving our main result, we relegate the details to \onlyLong{Appendix~\ref{app:algo}.}\onlyShort{the appendix of the attached full paper.}

\newcommand{\thmAlgo}{
If $\lceil \frac{m}{3} \rceil < s$, then there exists a deterministic one-way protocol for $\overlap_{m,s}$ in the simultaneous $3$-party model, such that Alice and Bob send at most $s-1$ bits to Charlie.
}
\begin{theorem} \label{thm:algo}
\thmAlgo
\end{theorem}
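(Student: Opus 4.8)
The plan is to give an explicit simultaneous protocol in which Alice and Bob each transmit their input with one cleverly chosen coordinate omitted. Fix, as a function of $m$ and $s$ only (hence known to all three players), a map $\alpha\colon\binom{[m]}{s}\to[m]$ with $\alpha(I)\in I$ for every $I$; the construction of $\alpha$ is described below. On input $X$ with $\supp(X)=I$, Alice sends the $s-1$ bits $(X_i)_{i\in I\setminus\{\alpha(I)\}}$ in increasing order of index, and symmetrically Bob sends $(Y_j)_{j\in J\setminus\{\alpha(J)\}}$ where $J=\supp(Y)$. Since Charlie knows $\supp(X)$ and $\supp(Y)$, he can compute $\sigma$, $\alpha(I)$, $\alpha(J)$, and he knows exactly which positions were transmitted; if $\sigma\neq\alpha(I)$ he reads $X_\sigma$ off Alice's message and outputs ``yes'' iff $X_\sigma=0$, and otherwise he reads $Y_\sigma$ off Bob's message and outputs ``yes'' iff $Y_\sigma=1$. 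On a valid instance, Properties~(P1)--(P2) of Definition~\ref{def:valid} give $\supp(X)\cap\supp(Y)=\{\sigma\}$ together with $X_\sigma\oplus Y_\sigma=1$, so this procedure is correct provided the ``otherwise'' branch is never reached with $\sigma=\alpha(J)$ as well --- that is, provided we never simultaneously have $\alpha(I)=\sigma$ and $\alpha(J)=\sigma$. Because $\alpha(I)\in I$, $\alpha(J)\in J$, and $I\cap J$ is a singleton on a valid instance, it is enough to construct $\alpha$ so that
\[
(\star)\qquad\text{for all }I,J\in\binom{[m]}{s}\text{ with }|I\cap J|=1,\text{ we have }\alpha(I)\neq\alpha(J).
\]
This is precisely the negation of the ``bad configuration'' (a)--(b) in the proof of Theorem~\ref{thm:overlap} specialized to messages of length $s-1$, which is the sense in which the construction is inspired by the lower bound.

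To build such an $\alpha$ I will use the hypothesis $\lceil m/3\rceil<s$, equivalently $\lceil m/3\rceil\le s-1$. Partition $[m]$ into $\ell:=\lceil m/3\rceil$ blocks, each of size $2$ or $3$ (for example, take blocks of size $3$ and, if $3$ does not divide $m$, shrink one or two of them to size $2$); by the hypothesis, $\ell\le s-1$. Endow each block with a cyclic order and let $w\colon[m]\to[m]$ map each element to its cyclic successor within its own block, so $w(\gamma)\neq\gamma$ and $w(\gamma)$ is in the same block as $\gamma$. The crucial point is that every $I\in\binom{[m]}{s}$ contains a pair $\{\gamma,w(\gamma)\}$: since $|I|=s>\ell$, some block must contain at least two elements of $I$, and in a block of size at most $3$ any two distinct elements are cyclically adjacent, so one of them equals $w$ of the other. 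Now define $\alpha(I):=\min\{\gamma\in I : w(\gamma)\in I\}$; this is well defined, lies in $I$, and satisfies $w(\alpha(I))\in I$.

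It remains to verify $(\star)$ and assemble the argument. If $\alpha(I)=\alpha(J)=\gamma$, then, by the defining property of $\alpha$, both $\gamma$ and $w(\gamma)$ belong to $I$ and both belong to $J$; hence $\{\gamma,w(\gamma)\}\subseteq I\cap J$, and since $w(\gamma)\neq\gamma$ this forces $|I\cap J|\ge2$, which proves $(\star)$. Plugging this into the protocol: on a valid instance with $I=\supp(X)$, $J=\supp(Y)$, and $I\cap J=\{\sigma\}$, if $\sigma\neq\alpha(I)$ then $\sigma\in I\setminus\{\alpha(I)\}$ so Alice's message contains $X_\sigma$ and Charlie is correct; and if $\sigma=\alpha(I)$, then $(\star)$ forbids $\alpha(J)=\sigma$, so $\sigma\in J\setminus\{\alpha(J)\}$, Bob's message contains $Y_\sigma$, and Charlie's answer (``yes'' iff $Y_\sigma=1$) is again correct because $X_\sigma\oplus Y_\sigma=1$. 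Both players send exactly $s-1$ bits, which establishes Theorem~\ref{thm:algo}.

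I expect the main obstacle to be recognizing that the whole problem reduces to the combinatorial condition $(\star)$ and that $(\star)$ is achievable exactly when $\lceil m/3\rceil<s$: finding $\alpha$ is equivalent to finding a graph on $[m]$ --- realizable with only $m$ edges, as the functional digraph of some $w$ --- every $s$-subset of which spans an edge, i.e.\ a graph with independence number at most $s-1$; a disjoint union of triangles (plus a couple of edges on any leftover vertices) is extremal, with independence number $\lceil m/3\rceil$ by a Tur\'an/Caro--Wei argument, which is exactly the hypothesis. The remaining care is purely bookkeeping: getting the block sizes right in the definition of $w$, and checking that Charlie's two-case decoding is exhaustive on valid instances.
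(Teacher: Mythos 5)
Your protocol is correct and is essentially the paper's own proof: both remove one support-determined coordinate chosen via a partition of $[m]$ into $\lceil m/3\rceil$ blocks with a cyclic-successor map, and both reduce correctness to the fact that two supports assigned the same omitted index must share at least two elements (your condition $(\star)$ is exactly the paper's disjoint families $\mathcal{R}_1,\ldots,\mathcal{R}_m$ with pairwise intersections of size at least $2$ within each family, with your $\alpha^{-1}(i)$ playing the role of $\mathcal{R}_i$). The only cosmetic differences are that you make the choice canonical by taking a minimum instead of breaking ties arbitrarily and use blocks of size $2$ or $3$ rather than allowing a size-$1$ leftover block, neither of which changes the argument.
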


\section{Simulation} \label{sec:simulation}

In this section, we show how Alice, Bob, and Charlie can jointly simulate a given $k$-edge connectivity algorithm $\mathcal{A}^{\text{$k$-conn}}$ in the 3-party model to solve the $\overlap_{m,s}$ problem, for some given integers $m$ and $s$, assuming that the sketches are sufficiently small for Corollary~\ref{cor:exists} to hold. %

We now describe the details of the simulation and how the players can create  the lower bound graph (see Sec.~\ref{sec:lb_graph}), given a valid instance of $\overlap_{m,s}$.
First, each player locally computes 
\begin{align}
n = 2\lt\lceil {4m}/{3} \rt\rceil, \label{eq:n} 
\end{align}
which will correspond to the number of nodes in the graph $G \in \mathcal{G}_{k,n}$ that the players construct. 
According to Corollary~\ref{cor:exists}, there is a partitioning of $W$ into $\set{A^*,B^*}$, as well as a set $V_{\text{good}}$ of size at least $m$ vertices, with the property that there exist indistinguishable separated pairs (see Def.~\ref{def:indist_sep_pair}), for each of the nodes in $V_{\text{good}}$.
We use $V_{\text{good}}^{\leq m}$ to denote the first $m$ nodes in $V_{\text{good}}$, ordered by IDs.
Note that the players know, in advance, the IDs in the sets $V$, $V_{\text{good}}$, $A^*$, $B^*$, as these are fixed for all possible lower bound graphs in $\mathcal{G}_{k,n}$.
Furthermore, they can also pre-compute the indistinguishable separated pair $(S_0^i,S_1^i)$ , for every $v_i \in V_{\text{good}}^{\leq m}$, as these are fully determined by fixing $k$, $n$, and the algorithm at hand.

\onlyLong{
\begin{figure}[t]
\centering
\begin{subfigure}[t]{0.3\textwidth}
\includegraphics[scale=0.9]{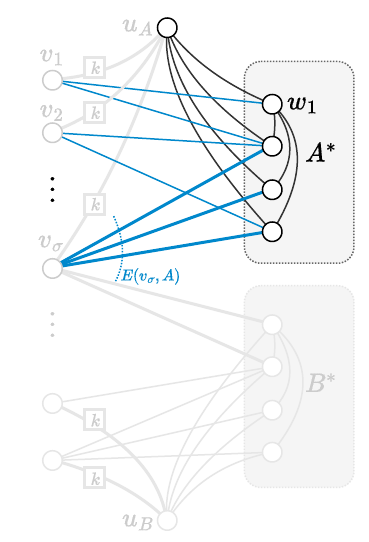}
\caption{Alice simulates all nodes in $A^*$.}
\label{fig:alice}
\end{subfigure}
\hspace{2mm}
\begin{subfigure}[t]{0.3\textwidth}
\includegraphics[scale=0.9]{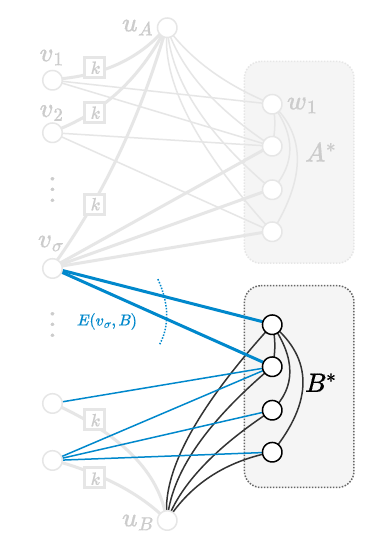}
\caption{Bob simulates all nodes in $B^*$.}
\label{fig:bob}
\end{subfigure}
\hspace{2mm}
\begin{subfigure}[t]{0.3\textwidth}
\includegraphics[scale=0.9]{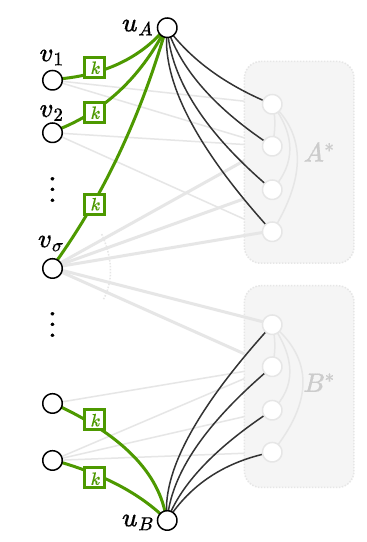}
\caption{Charlie simulates $u_A$ and $u_B$, as well as all nodes in $V$, even though he only knows a subset of their incident edges.}
\label{fig:charlie}
\end{subfigure}
  \centering
\caption{
Overview of the Simulation. Each player simulates a certain subset of the nodes and their incident edges. Note that the edges of $v_\sigma$ are ``split'' between Alice and Bob.
}
\label{fig:sim}
\end{figure}
}
\subsubsection*{Alice's and Bob's Simulation:}
Alice simulates all nodes in $A^*$, whereas Bob is responsible for the nodes in $B^*$. 
See \onlyLong{Figures~\ref{fig:alice} and \ref{fig:bob}}\onlyShort{Figures~3a and 3b in the attached full paper} for an example. 
First, Alice creates a clique on $A^*$ and connects $u_A$ to each node in this clique.
Then, for each $i \in \supp(X)$ where $v_i \in V_{\text{good}}^{\leq m}$, Alice inspects the input $X_i$ and adds edges incident to the nodes in $A^*$ as follows: 
\begin{itemize} 
\item If $X_i = 0$, then she includes the edges between $v_i$ and the nodes in $S_1^i \cap A^*$. 
\item On the other hand, if $X_i = 1$, then Alice adds all edges connecting $v_i$ to $S_0^i \cap A^*$.
\end{itemize}

Bob adds a clique on $B^*$ and, analogously to Alice, includes an edge between $u_B$ and each node in $B^*$. Next, we describe how Bob creates the edges between $B^*$ and node $v_j \in V_{\text{good}}^{\leq m}$, for each $j \in \supp(Y)$:
\begin{itemize} 
\item If $Y_j = 0$, then he adds edges between $v_j$ and $S_0^j \cap B^*$. 
\item If $Y_j = 1$, he connects $v_j$ to each node in $S_1^j \cap B^*$.
\end{itemize} 
Finally, Alice and Bob locally execute algorithm $\mathcal{A}^{\text{$k$-conn}}$ to compute the sketches produced by their simulated nodes in $A^*$ and $B^*$, respectively, which they send to Charlie.

\subsubsection*{Charlie's Simulation:}
Charlie is responsible for computing the sketches for $u_A$, $u_B$, and all nodes in $V$, and he will also simulate the referee. 
To this end, he creates edges between $u_A$ and each node in $A^*$---recall that these edges were also created by Alice for simulating the nodes in $A^*$---and he also connects $u_A$ to every $v_i \in V \setminus \supp(Y)$ via $k$ parallel edges.
Note that $V \setminus  \supp(Y)$ contains every $i$ such that $i \in \supp(X)$ or $i \notin \supp(X) \cup \supp(Y)$, whereby the latter case is equivalent to $X_i=Y_i=\perp$. 
Charlie also includes $k$ parallel edges connecting $u_B$ to every $v_j$ where $j \in \supp(Y) \setminus \set{\sigma}$. 
Similarly, he adds edges between $u_B$ and each node in $B^*$. \onlyLong{See Figure~\ref{fig:charlie}.}\onlyShort{See Figure~3c in the attached full paper.}

Next, we describe how Charlie simulates the nodes in $V$:
\begin{itemize} 

\item 
First, consider the crucial node $v_i \in V_{\text{good}}^{\leq m}$, where $\sigma = i$:
Since Charlie does not know whether Alice and Bob used $S_0^i$ or $S_1^i$ for creating the edges between their nodes in $A^* \cup B^*$ and $v_\sigma$, he cannot hope to faithfully recreate the entire neighborhood of $v_i$. 
However, what he can do instead is to directly compute the sketch that algorithm $\mathcal{A}^{\text{$k$-conn}}$ produces for node $v_i$, in the case that $\sigma=i$.
To see why this is true, recall that $S_0^i$ or $S_1^i$ form an indistinguishable separated pair, and Charlie knows the block $\mathcal{B}_i$ in partition $\mathcal{P}_{i,\sigma}$ that contains $S_0^\sigma$ as well as $S_1^\sigma$.
This ensures that $v_i$ sends the same sketch $\pi(\mathcal{B}_i)$ given either neighborhood. 
\item 
For any node $v_j \in V_{\text{good}}^{\leq m}$ ($j \ne \sigma$),
Charlie proceeds analogously as in the case $j = \sigma$, with the only difference being that he no longer uses $\mathcal{P}_{j,\sigma}$ for identifying the common block that contains $S_0^j$ and $S_1^j$ (and deriving $v_j$'s sketch). 
Instead, if $v_j$ is $A$-restricted (i.e., $j \in \supp(X) \setminus  \supp(Y)$), then he uses $\tilde{\mathcal{P}}_{j,A^*}$ to find the common block that contains $S_0^j$ and $S_1^j$ and uses the associated block in $\mathcal{P}_{j,A^*}$ (see Page~\pageref{lem:constant_prob}) to derive $v_j$'s sketch. 
Otherwise, $v_j$ must be $B$-restricted ($j \in \supp(Y) \setminus  \supp(X)$), prompting him to find the common block in $\tilde{\mathcal{P}}_{j,B^*}$, and compute $v_j$'s sketch accordingly. 

\item 
Charlie can directly simulate the remaining nodes in $V$, i.e., every 
\[
v_r \in (V \setminus V_{\text{good}}^{\leq m}) \cup \Set{ v_i \in V_{\text{good}}^{\leq m} \ \md|\ i \notin \lt( \supp(X) \cup \supp(Y) \rt)},
\]
since the only edges added for $v_r$ are $k$ parallel edges to $u_A$.
\end{itemize}
Charlie simulates that the referee receives the sketches sent by the nodes in $V$, $u_A$, and $u_B$, as well as the messages that it received from Alice and Bob for the nodes in $A^*$ and $B^*$, which is sufficient for invoking $\mathcal{A}^{\text{$k$-conn}}$ and obtaining the decision by the referee. 
Charlie answers ``yes'' if and only if the decision was that the graph was $k$-edge connected.  

\subsection{Correctness and Complexity Bounds of the Simulation} \label{sec:sim_analysis}

As made evident by the above description, our simulation constructs only certain graphs in $\mathcal{G}_{k,n}$. 
In particular, these are graphs where all edges in the cut $E(V,W)$ are determined by separating pairs.  
We start by formalizing this restriction:

\begin{definition} \label{def:compat}
Consider any graph $G \in \mathcal{G}_{k,n}$ and recall the existence of set $V_{\text{good}}$, the partition $(A^*, B^*)$, and the separating pairs $(S_0^i, S_1^i)$ for each $v_i \in V^{\leq m}_{\text{good}}$, guaranteed by applying Corollary~\ref{cor:exists} to algorithm $\mathcal{A}^{\text{$k$-conn}}$.
We say that $G$ is \emph{compatible} with a valid instance $(X,Y)$ of $\overlap_{m,s}$ if $|V_{\text{good}}|\ge m$ and, for every $i \in [m]$: 
\begin{enumerate} 
\item If $X_i=0$, then $E(v_i,A^*)=A^* \cap S_1^i$. 
\item If $X_i=1$, then $E(v_i,A^*)=A^* \cap S_0^i$. 
\item If $Y_i=0$, then $E(v_i,B^*)=B^* \cap S_0^i$. 
\item If $Y_i=1$, then $E(v_i,B^*)=B^* \cap S_1^i$. 
\item If $i \in \supp(X)$ or $i \notin (\supp(X) \cup \supp(Y))$, then $v_i$ has $k$ edges to $u_A$. 
\item If $i \in \supp(Y) \setminus \supp(X)$, then $v_i$ has $k$ edges to $u_B$. 
\end{enumerate}
\end{definition}

\begin{lemma} \label{lem:compat}
Let $G$ be a graph that is compatible with an instance $I = (X, Y)$ of $\overlap$.
It holds that $G$ is $k$-edge connected if and only if the solution to $I$ is ``yes'', i.e., $X_\sigma=0$ and $Y_\sigma=1$.
\end{lemma}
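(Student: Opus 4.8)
The statement is essentially a restatement of Lemma~\ref{lem:lb_graph} transported through the compatibility dictionary of Definition~\ref{def:compat}, so the plan is to reduce to that lemma. First I would unpack what ``compatible'' tells us about the $W$-neighborhood of each $v_i$. For $i \neq \sigma$, Property~(P2) of a valid instance gives $X_i = \perp$ or $Y_i = \perp$, so exactly one of the four cases (1)--(4) in Definition~\ref{def:compat} applies to $v_i$ (together with (5) or (6)); I would check that in each case $v_i$ ends up either $A^*$-restricted or $B^*$-restricted in the sense of (E3)/(E4). For instance, if $X_i=0$ and $Y_i=\perp$, then $E(v_i,A^*) = A^* \cap S_1^i$ and $E(v_i,B^*)=\emptyset$, and by (5) there are $k$ parallel edges to $u_A$; this is exactly (E3). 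The case $X_i=1$ is the same with $S_0^i$. Symmetrically, $Y_i=0$ or $Y_i=1$ with $X_i=\perp$ gives a $B^*$-restricted node via (E4) and (6). The case $X_i=Y_i=\perp$ makes $v_i$ a (trivially) $A^*$-restricted node by (5), consistent with the remark after (E4). Hence $G$ is a legitimate member of $\mathcal{G}_{k,n}$ with partition $\set{A^*,B^*}$ and determining index $\sigma$.

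Next I would pin down the neighborhood of the determining node $v_\sigma$ and show that it satisfies exactly (C0) or (C1) according to the bit pair $(X_\sigma,Y_\sigma)$. Since $(X,Y)$ is valid, $\sigma \in \supp(X)\cap\supp(Y)$, so both $X_\sigma$ and $Y_\sigma$ are in $\set{0,1}$ with $X_\sigma \oplus Y_\sigma = 1$. There are two cases. If $X_\sigma = 0$ and $Y_\sigma = 1$, then (1) and (4) give $E(v_\sigma,A^*) = A^* \cap S_1^\sigma$ and $E(v_\sigma,B^*) = B^* \cap S_1^\sigma$, so the full $W$-neighborhood of $v_\sigma$ is $S_1^\sigma$ (using $A^* \cup B^* = W$ and that these are disjoint). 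Since $(S_0^\sigma,S_1^\sigma)$ is a separated pair for $v_\sigma$ with respect to $A^*,B^*$ (Definition~\ref{def:sep_pair}, guaranteed by Corollary~\ref{cor:exists}), the neighborhood $S_1^\sigma$ satisfies $|S_1^\sigma \cap A^*| \le k-1$ and $|S_1^\sigma \cap B^*| \ge k$, which is precisely (C1). If instead $X_\sigma = 1$ and $Y_\sigma = 0$, then (2) and (3) give the $W$-neighborhood $S_0^\sigma$, and the separated-pair property yields $|S_0^\sigma \cap A^*| \ge k$, $|S_0^\sigma \cap B^*| \le k-1$, which is (C0). I should also double check the degree and the $u_A$-edge conditions for $v_\sigma$: the size $|S_b^\sigma| = 2k-1$ comes from $S_b^\sigma \in \mathcal{S}$ with $d = 2k-1$ (below Lemma~\ref{lem:S}), matching the ``$2k-1$ edges to $W$'' requirement, and the $k$ parallel edges between $v_\sigma$ and $u_A$ required by (E5) are supplied by (5), since $\sigma \in \supp(X)$. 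So $G$ genuinely lies in $\mathcal{G}_{k,n}$ and its determining index behaves as stipulated.

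Finally I would invoke Lemma~\ref{lem:lb_graph}: $G$ is $k$-edge connected iff (C1) holds. Combining with the case analysis above, (C1) holds iff $X_\sigma=0$ and $Y_\sigma=1$, i.e., iff the answer to $I$ is ``yes''. That closes the argument.

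\textbf{Main obstacle.} There is no deep obstacle here — this is a bookkeeping lemma. The one point requiring care is making sure the compatibility conditions (1)--(6) really do assemble, for every $i$, into a configuration that is \emph{admissible} for $\mathcal{G}_{k,n}$ (so that Lemma~\ref{lem:lb_graph} is applicable at all): one must verify the exhaustiveness and mutual exclusivity of the cases using Property~(P2) and $|X|=|Y|=s$, confirm that each non-$\sigma$ node is $A^*$- or $B^*$-restricted, and confirm the degree/parallel-edge constraints on $v_\sigma$ from the fact that the separating pairs were drawn from $\mathcal{S} \subseteq \binom{W}{2k-1}$. Once that is in place, the equivalence is immediate from Lemma~\ref{lem:lb_graph} and Definition~\ref{def:sep_pair}.
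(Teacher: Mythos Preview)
Your proposal is correct and follows essentially the same approach as the paper: reduce to Lemma~\ref{lem:lb_graph} by verifying that compatibility yields $G \in \mathcal{G}_{k,n}$, then use Definition~\ref{def:sep_pair} to match the two bit patterns $(X_\sigma,Y_\sigma)\in\{(0,1),(1,0)\}$ with conditions (C1) and (C0) respectively. The paper's proof is terser---it asserts $G \in \mathcal{G}_{k,n}$ directly from compatibility and only spells out the $v_\sigma$ case---whereas you additionally work through the $i\neq\sigma$ bookkeeping explicitly; this extra care is fine but not strictly needed.
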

\onlyLong{
\begin{proof}
Since $G$ is a compatible graph, we know that $G \in \mathcal{G}_{k,n}$.
Thus, it follows from Lemma~\ref{lem:lb_graph} that $G$ is $k$-edge connected if and only condition (C1) holds, i.e., $|E(v_\sigma,A^*)| \le k-1$ and $|E(v_\sigma,B^*)| \ge k$.
Since $(S_0^\sigma,S_1^\sigma)$ form a separated pair, we know from Definition~\ref{def:sep_pair} that this is satisfied for $S_1^\sigma$.
According to Definition~\ref{def:compat}, we use $S^{\sigma}_1$ to determine the neighbors of $v_\sigma$ in $A^*\cup B^*$ if and only if $X_\sigma=0$ and $Y_\sigma=1$, which proves the correspondence between $k$-edge connectivity and the solution to $\overlap$.
\end{proof}
}

Equipped with Definition~\ref{def:compat}, we \onlyShort{show in the attached full paper}\onlyLong{are ready to show} that the sketches produced by the simulation indeed correspond to an actual execution of $\mathcal{A}^{\text{$k$-conn}}$ on a suitable graph in $\mathcal{G}_{k,n}$.

\begin{lemma} \label{lem:same_messages}
Suppose that  algorithm $\mathcal{A}^{\text{$k$-conn}}$ has a maximum sketch length of $L=o(k)$ bits.
Then, Charlie provides as input to the referee in the simulation the same set of sketches that the referee receives when executing algorithm $\mathcal{A}^{\text{$k$-conn}}$ on a graph compatible with the instance $(X,Y)$. 
\end{lemma}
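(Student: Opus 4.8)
The plan is to verify, node by node, that every sketch Charlie feeds to the referee equals the sketch that some fixed compatible graph $G^\star \in \mathcal{G}_{k,n}$ would produce under $\mathcal{A}^{\text{$k$-conn}}$. First I would \emph{define} the target graph $G^\star$: it is the graph on $V \cup W \cup \{u_A,u_B\}$ whose edges inside $W$ and to $u_A,u_B$ are exactly the ones created by Alice, Bob, and Charlie (cliques on $A^\star$, $B^\star$, the star edges from $u_A,u_B$, and the parallel edges from $u_A$ or $u_B$ to each $v_i$ dictated by $\supp(X),\supp(Y)$), and whose cut edges $E(v_i, W)$ are determined by $(X_i, Y_i)$ through the separated pair $(S_0^i, S_1^i)$ exactly as in Definition~\ref{def:compat}. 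I would first check $G^\star$ is well-defined and lies in $\mathcal{G}_{k,n}$: conditions (E1)--(E5), (C0)/(C1) must hold. Here is where the properties of a separated pair are used — for $v_\sigma$, the pair $(S_0^\sigma,S_1^\sigma)$ guarantees $|E(v_\sigma,A^\star)|$ and $|E(v_\sigma,B^\star)|$ satisfy exactly one of (C0), (C1); for the other $v_i \in V^{\le m}_{\text{good}}$ with $i \in \supp(X)\triangle\supp(Y)$, Definition~\ref{def:indist_sep_pair}(ii)--(iii) plus the $A$-restricted/$B$-restricted bookkeeping ensures $v_i$ gets all its $W$-edges inside $A^\star$ or inside $B^\star$, matching (E3)/(E4); and for the remaining $v_r$, only $k$ parallel edges to $u_A$ are present, which is the degenerate case of (E3).

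Next I would go through the four groups of nodes and match sketches. \textbf{Alice's nodes ($A^\star$) and Bob's nodes ($B^\star$):} Alice runs $\mathcal{A}^{\text{$k$-conn}}$ on precisely the neighborhoods these nodes have in $G^\star$ — the clique edges, the $u_A$-edge, and the edges to each $v_i$ given by $S^i_{1-X_i}$ (resp. $S^i_{X_i \oplus 1}$) intersected with $A^\star$. Since a node's sketch is a function only of its own incident-edge set (and its ID, plus the advice of being $A$-restricted etc., which is consistent), these are literally the correct sketches. The one subtlety is that each edge of $v_\sigma$ is ``split'': its $A^\star$-endpoints are created by Alice and its $B^\star$-endpoints by Bob, but this is fine because $G^\star$'s edge set is the union, and Alice (resp. Bob) only needs to know the $A^\star$-side (resp. $B^\star$-side) of that adjacency to compute her own nodes' sketches. \textbf{Charlie's structural nodes ($u_A$, $u_B$, and the trivial $v_r$):} Charlie knows all of their incident edges exactly — $u_A$ is adjacent to all of $A^\star$ and to every $v_i$ with $i \notin \supp(Y)$ via $k$ parallel edges, etc. — so he computes their sketches directly and correctly. \textbf{Charlie's hard nodes ($v_i \in V^{\le m}_{\text{good}}$):} this is the crux. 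Charlie does \emph{not} know $v_i$'s full $W$-neighborhood because he does not know $X_i$ (when $i=\sigma$, or more precisely does not know which of $S_0^i,S_1^i$ was used). But $v_i$'s actual neighborhood in $G^\star$ is one of the two sets in the separated pair, and by Definition~\ref{def:indist_sep_pair}(i) both lie in the same block of $\mathcal{P}_{i,\sigma}$ when $\sigma=i$; since $\mathcal{P}_{i,\sigma}$ is by definition the partition of $\mathcal{S}$ according to the message $v_i$ sends (given $\sigma=i$), $v_i$ emits the \emph{same} sketch $\pi(\mathcal{B}_i)$ under either neighborhood. So Charlie can output $\pi(\mathcal{B}_i)$ without resolving the ambiguity. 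For $i \ne \sigma$, $v_i$ is $A$-restricted or $B$-restricted; then $v_i$'s real neighborhood is $S^i_b \cap A^\star$ (resp. $\cap B^\star$) for some $b$, but both $\phi_{A^\star}(S^i_0)$ and $\phi_{A^\star}(S^i_1)$ lie in the same block of $\mathcal{P}_{i,A^\star}$ by Definition~\ref{def:indist_sep_pair}(ii), so again $v_i$'s sketch is block-determined and Charlie recovers it from $\tilde{\mathcal{P}}_{i,A^\star}$ (resp. $\tilde{\mathcal{P}}_{i,B^\star}$). Here one must also invoke Corollary~\ref{cor:exists} to be sure $|V_{\text{good}}| \ge m$, which requires $L = o(k)$ and $|V| = \Theta(n)$ with $n = 2\lceil 4m/3\rceil$; I would double-check that $n,k$ satisfy \eqref{eq:k} so the construction is legal, and that the $A$-restricted/$B$-restricted status Charlie assigns to $v_i$ (based on whether $i \in \supp(X)\setminus\supp(Y)$ or $i \in \supp(Y)\setminus\supp(X)$) agrees with $G^\star$'s structure, which follows from validity property (P2).

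Finally I would assemble: the referee in the real execution on $G^\star$ receives exactly the multiset of sketches from all $n$ nodes; Charlie hands the referee exactly this multiset (Alice's + Bob's messages for $A^\star \cup B^\star$, plus the sketches he computed for $V \cup \{u_A,u_B\}$), so the referee's decision is identical, and $G^\star$ is compatible with $(X,Y)$ by construction, which is what the lemma asserts. I expect the main obstacle to be the bookkeeping around $v_i$ for $i \ne \sigma$: one must carefully argue that the ``advice'' given to $v_i$ (that it is $A$-restricted vs.\ $B$-restricted vs.\ $v_\sigma$) is the same in the simulation as in $G^\star$, so that the sketch function applied is the same one that induced $\mathcal{P}_{i,A^\star}$ resp.\ $\mathcal{P}_{i,\sigma}$ — otherwise the block-invariance of Definition~\ref{def:indist_sep_pair} would not apply. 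A second, more minor point is confirming that the separated-pair constraints force $G^\star \in \mathcal{G}_{k,n}$ even though the simulation only ever builds this structured subfamily; this is really just re-reading off (E1)--(E5) and the $|S^i \cap A^\star| / |S^i \cap B^\star|$ inequalities from Definition~\ref{def:sep_pair}.
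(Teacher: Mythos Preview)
Your proposal is correct and follows essentially the same node-by-node case analysis as the paper's proof: both first invoke Corollary~\ref{cor:exists} (using $n=2\lceil 4m/3\rceil$ and $L=o(k)$) to guarantee $|V_{\text{good}}|\ge m$, and then verify that the sketches for $u_A,u_B$, for nodes in $A^\star\cup B^\star$, for $v_i\in V^{\le m}_{\text{good}}$ (via the block-invariance of Definition~\ref{def:indist_sep_pair}), and for the remaining trivial $v_r$ all coincide with those produced on a compatible graph. Your treatment is in fact slightly more careful than the paper's in two places---you explicitly verify that the constructed $G^\star$ satisfies (E1)--(E5) and one of (C0)/(C1), and you flag the advice-consistency issue (that the $A$-restricted/$B$-restricted/$v_\sigma$ label Charlie uses for $v_i$ must match the label in $G^\star$)---both of which the paper leaves implicit.
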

\onlyLong{
\begin{proof}
Since the players construct a graph $G \in \mathcal{G}_{k,n}$, where $n = 2\lt\lceil \frac{4m}{3} \rt\rceil$, it follows from \eqref{eq:sizeV} (on page~\pageref{eq:sizeV}) that $|V| \ge \lt\lceil \frac{4m}{3} \rt\rceil$.
According to Corollary~\ref{cor:exists}, this implies that the set $V_{\text{good}}$ must have a size of $c\cdot m$ for some constant $c\ge 1$, which ensures that the set $V_{\text{good}}^{\leq m} \subseteq V_{\text{good}}$ of size $m$ exists. 
Recall that all players can compute $V_{\text{good}}$, since this is fully determined by the algorithm at hand (i.e., $\mathcal{A}^{\text{$k$-conn}}$) and the parameters $n$, $k$, and $L$. 
We now argue that the simulation computes the correct sketch for every type of node in graph $G$:

\begin{itemize} 
\item Nodes $u_A$ and $u_B$: These nodes are only simulated by Charlie who knows $\supp(X)$, $\supp(Y)$, and thus also $\sigma$.
By the description of the simulation, Charlie creates their incident edges in a way that satisfies Points~5 and 6 of Definition~\ref{def:compat}.

\item Node $w_j \in A^* \cup B^*$: Without loss of generality, assume that $w_j \in A^*$; the case $w_j \in B^*$ is symmetric. In the execution on the compatible graph $G$, the edges in the cut $E(V,w_j)$ are fully determined by the separating pairs $(S_0^1,S_1^1),\ldots,(S_0^m,S_1^m)$, which corresponds to how Alice creates the edges incident to $w_j$ in the simulation. Moreover, Alice will also add an edge $\set{u_A,w_j}$, which ensures the same sketch for $w_j$ as in the actual execution.

\item Node $v_i \in V_{\text{good}}^{\leq m}$: According to the simulation, Charlie will select one of the blocks $\mathcal{B}_\sigma,\tilde{\mathcal{B}}_A, \tilde{\mathcal{B}}_B$, where $\mathcal{B}_\sigma \in P_{i,\sigma}$, $\tilde{\mathcal{B}}_A \in \tilde{P}_{i, A^*}$ and $\tilde{\mathcal{B}}_B \in \tilde{P}_{i,B^*}$, with the property that each of the blocks contains $(S_0^i, S_1^i)$. Recall that these blocks exist due to Corollary \ref{cor:exists}. 
Let $\mathcal{B}_A \in P_{i, A^*}$ be the associated block  of $\tilde{\mathcal{B}}_A$, and $\mathcal{B}_B \in P_{i, B^*}$ be the associated block  of $\tilde{\mathcal{B}}_B$, see Page~\pageref{lem:constant_prob}.
If $i = \sigma$, then he computes $\pi( \mathcal{B}_\sigma)$ where $\pi( \mathcal{B}_\sigma)$ corresponds to the message sent by $v_i$ for the inputs in $\mathcal{B}_\sigma$; otherwise, if $i \in \supp(X) \setminus \supp(Y)$, then he computes $\pi(\mathcal{B}_A)$. Finally, in the case that $i \in \supp(Y) \setminus \supp(X)$, he computes $\pi(\mathcal{B}_B)$ instead. A crucial observation is that, even though Charlie does not know whether Alice used $S_0^i \cap A^*$ or $S_1^i \cap A^*$ to create the edges in $E(v_i,A^*)$, this does not prevent him from computing the correct sketch for $v_i$, since $S_0^i$ and $S_1^i$ are guaranteed to be in the same block in each of the partitions; a similar argument applies to the edges in $E(v_i,B^*)$. 
\item Node $v_i \in V \setminus  V_{\text{good}}^{\leq m}$: 
These nodes do not have any edges to $A^* \cup B^*$.
Instead, Charlie only creates $k$ parallel edges to $u_A$, which matches the neighborhood of $v_i$ in the compatible graph $G$. \qedhere
\end{itemize}
\end{proof}
}
\begin{lemma} \label{lem:simulation}
Suppose there exists a deterministic $k$-edge connectivity algorithm, with the property that every node sends a sketch of length at most $L=o\lt( k  \rt)$ bits, when executing on $\mathcal{G}_{k,n}$.
Then, $\overlap_{m,s}$ has a deterministic communication complexity of $o\lt( k\cdot \sqrt{m} \rt)$ bits. 
\end{lemma}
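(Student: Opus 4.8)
The plan is to combine the simulation described above with the bound on the number of nodes simulated by Charlie, and then bound the total communication. By the construction of the simulation (and Lemma~\ref{lem:compat}), if we set $n = 2\lceil \tfrac{4m}{3}\rceil$ and pick $s = \Theta(m)$ appropriately so that a set $V_{\text{good}}^{\leq m}$ of size $m$ exists (which holds since $|V| = \Theta(n) = \Theta(m)$ and Corollary~\ref{cor:exists} guarantees $|V_{\text{good}}| \ge \tfrac{3}{4}|V|$), the players construct a graph $G \in \mathcal{G}_{k,n}$ compatible with $(X,Y)$. By Lemma~\ref{lem:same_messages}, Charlie hands the referee exactly the multiset of sketches that $\mathcal{A}^{\text{$k$-conn}}$ produces on $G$, so the referee's decision is correct, and by Lemma~\ref{lem:compat} this decision answers the $\overlap_{m,s}$ instance correctly. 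Hence the simulation is a correct deterministic protocol for $\overlap_{m,s}$.

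Next I would bound the communication. The only messages sent in the simulation are Alice's sketches for the nodes in $A^*$ and Bob's sketches for the nodes in $B^*$; Charlie sends nothing. Each node's sketch has length at most $L = o(k)$ bits by assumption. Since $|A^*|, |B^*| \le |W| = \lfloor \sqrt{n}\rfloor = O(\sqrt{m})$, the total number of bits communicated is at most $(|A^*| + |B^*|) \cdot L = O(\sqrt{m}) \cdot o(k) = o(k\sqrt{m})$. This is exactly the claimed deterministic communication complexity of $\overlap_{m,s}$.

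The one point that needs care — and which I expect to be the main (though modest) obstacle — is verifying that the choice of $s$ can be made to satisfy simultaneously the hypotheses of Corollary~\ref{cor:exists} (so that enough good nodes and a valid partition exist), the size constraint $s \le \lceil m/2 \rceil$ in the definition of $\overlap_{m,s}$, and the requirement that $|V_{\text{good}}^{\leq m}|$ genuinely reaches $m$ given the relation between $n$ and $m$ in \eqref{eq:n}. Concretely, from \eqref{eq:sizeV} we get $|V| = n - \lfloor\sqrt{n}\rfloor - 2 \ge \lceil \tfrac{4m}{3}\rceil$ for large $m$, so $|V_{\text{good}}| \ge \tfrac{3}{4}|V| \ge m$, which is what lets $V_{\text{good}}^{\leq m}$ be well-defined. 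One also has to note that the indistinguishable separated pairs $(S_0^i, S_1^i)$ for $v_i \in V_{\text{good}}^{\leq m}$ are pre-computable from $\mathcal{A}^{\text{$k$-conn}}$, $n$, $k$, and $L$ alone, so all three players agree on the graph construction without any communication — this is what keeps Charlie's contribution to the communication cost at zero. Once these bookkeeping facts are in place, the lemma follows by assembling Lemmas~\ref{lem:compat}, \ref{lem:same_messages}, and the trivial communication count above.
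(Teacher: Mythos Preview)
Your proposal is correct and follows essentially the same approach as the paper: invoke Lemma~\ref{lem:same_messages} and Lemma~\ref{lem:compat} for correctness, then bound the communication by $|W|\cdot L = O(\sqrt{n})\cdot o(k) = o(k\sqrt{m})$ using $|W|=\lfloor\sqrt{n}\rfloor$ and $n=\Theta(m)$. The extra bookkeeping you flag about $|V_{\text{good}}|\ge m$ is handled in the paper inside the proof of Lemma~\ref{lem:same_messages} rather than here, and the choice of $s$ plays no role in this lemma's argument (it only matters later when invoking Theorem~\ref{thm:overlap}).
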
 
\onlyLong{
\begin{proof}
By Lemma~\ref{lem:same_messages}, Charlie invokes the referee with the same set of sketches as in the execution of algorithm $\mathcal{A}^{\text{$k$-conn}}$ on the compatible graph $G \in \mathcal{G}_{k,n}$.
Thus, correctness follows directly from Lemma~\ref{lem:compat}.

Next, we show the claimed bound on the communication complexity of $\overlap_{m,s}$.
By \eqref{eq:n} (see page~\pageref{eq:n}), we know that the size $n$ of $G$ is chosen such that $n=\Theta\lt( m \rt)$.
Since every node in $A \cup B$ sends a sketch of at most $L$ bits in the simulation, and these are all the nodes simulated by Alice and Bob, 
sending these concatenated sketches to Charlie requires at most
\begin{align}
O\lt( |A \cup B| \cdot L \rt) = O\lt( |W| \cdot L \rt) = O\lt( L \cdot \sqrt{n} \rt) 
=o\lt( k \cdot \sqrt{m} \rt)
\end{align}
bits, where we have used \eqref{eq:sizeW} in the second-last equality.
\end{proof}
}

\section{Combining the Pieces} \label{sec:combining}
We now use the results from the previous sections to prove Theorem~\ref{thm:main}.

\begin{reptheorem}{thm:main}
\thmMain
\end{reptheorem}
Let $L$ be the maximum length of a sketch. 
Suppose that every node in $V$ sends at most $o\lt( k \rt)$ bits, 
and consider an instance of $\overlap_{m,s}$.
Lemma~\ref{lem:simulation} tells us that we can solve an instance of $\overlap_{m,s}$ if Alice and Bob send a message of $o\lt( k \cdot \sqrt{m} \rt)$ bits.
Since $k \le O\lt( \sqrt{n} \rt) = O\lt( \sqrt{m} \rt)$,  the messages sent is thus $o(m)$ bits. 
We obtain a contradiction to 
Theorem~\ref{thm:overlap} which states that any deterministic algorithm for $\overlap_{m,s}$ must send $\Omega\lt( m \rt)$ bits in the worst case.

\section{Conclusion and Open Problems} \label{sec:conclusion}

A problem left open by our work is the complexity of \emph{randomized} sketching algorithms for deciding $k$-edge connectivity. 
As outlined in Section~\ref{sec:intro}, the existing lower bound approach for graph connectivity does not appear to be amenable to a straightforward generalization.
 This suggests that either new technical ideas are needed for proving such a bound, or perhaps there is indeed a more efficient randomized algorithm:

\begin{problem}
What is the sketch length of deciding $k$-edge connectivity, if we allow a small probability of error? 
\end{problem}

Our result implies that any deterministic sketching algorithm must send sketches of near-linear (in $k$) bits in the worst case. 
While the AGM sketches~\cite{AGM-soda12} yield an upper bound of $O(k\log^{3} n)$ bits, currently there is no non-trivial deterministic upper bound:  

\begin{problem}
Is there a deterministic $k$-edge connectivity algorithm that matches the best known randomized upper bound of $O\lt( k\log^3n \rt)$ bits on the sketch length?
\end{problem}

Finally, we point out that our techniques do not shed light on the most fundamental open problem in this setting, namely (1-edge) connectivity:

\begin{problem}
Can we solve connectivity deterministically in the distributed sketching model if messages are of length $o(n)$?
\end{problem}

\section*{Acknowledgements}
The authors would like to thank the anonymous reviewer for pointing out the relationship between the $\overlap$ problem and the Leave-One Index problem of \cite{ashvinkumar2023evaluating}.
\appendix
\section*{Appendix}
\section{Proof of Lemma~\ref{lem:S}} \label{app:lem:S}
\begin{replemma}{lem:S} 
\lemS
\end{replemma}
\begin{proof}
The proof is similar to Lemma~6 in \cite{kapralov2017optimal}.
Fix $N=\lt\lfloor \sqrt{\lt( \frac{\epsilon|W|}{2e d} \rt)^{\epsilon d/2}-1} \rt\rfloor$. We choose $S_1,\dots,S_{N}$ independently and uniformly at random from ${W \choose d}$.
We first bound the expected size of the overlap of two given subsets. 
Consider two distinct indices $i,j \le N$, and define indicator random variables $X_1,\dots,X_{d}$ such that $X_k=1$ if and only if the $k$-th  element of $S_j$ intersects with $S_i$, which happens with probability $\frac{d}{|W|}$, and thus $\EE\lt[ |S_i \cap S_j| \rt] = \sum_{k=1}^{d} \EE\lt[ X_k \rt] = \frac{d^2}{|W|}$.
While the $X_k$ variables are not independent, they are negatively dependent, and thus we can use a Chernoff bound to show concentration.
For $\delta=\frac{\epsilon|W|}{2d}-1$, we get
\begin{align}
\Pr\lt[ |S_i \cap S_j| \ge (1+\delta) \EE\lt[ |S_i \cap S_j| \rt] \rt] 
&\le \lt( \frac{e^{\delta}}{(1+\delta)^{(1+\delta)}} \rt)^{\EE\lt[ |S_i \cap S_j| \rt]} \notag\\ 
&\le \lt( \frac{e^{\frac{\epsilon|W|}{2d}}}{\lt(\frac{\epsilon|W|}{2d}\rt)^{\frac{\epsilon|W|}{2d}}} \rt)^{{d^2}/{|W|}}\notag\\ 
&= \lt( \frac{\epsilon |W|}{2e d}\rt)^{-\epsilon d/2}\notag 
\end{align}
It follows by a union bound over the ${N \choose 2} \le \lt( \frac{\epsilon|W|}{2e d} \rt)^{\epsilon d/2}-1$ possible pairs of subsets that a set family $\mathcal{S}$ with the required bounded intersection property exists with nonzero probability.
Note that
\begin{align}
|\mathcal{S}| = \exp\lt(\log N\rt) 
&\ge 
\exp \lt( \frac{1}{2} \lt( \log\lt( \lt( \frac{\epsilon|W|}{2e d}  \rt)^{\epsilon d/2}- 1 \rt) - 2 \rt) \rt) \notag\\ 
\ann{since $d \le \frac{\epsilon |W|}{2e^{2+4/\epsilon}}$}
&\ge
\exp \lt( \frac{1}{2} \lt( \log\lt( \lt( e^{1+4/\epsilon}  \rt)^{\epsilon d/2}- 1 \rt) - 2 \rt) \rt) \notag\\
&\ge
\exp \lt( \frac{1}{2} \lt( \log\lt( \lt( e^{4/\epsilon}  \rt)^{\epsilon d/2} \rt) - 2 \rt) \rt) \notag\\
&\ge
\exp \lt( \frac{1}{2} \lt( \log\lt(  e^{2d}   \rt) - 2 \rt) \rt) \notag\\
&\ge
e^{ d  - 1  }.\notag
\end{align}
\end{proof}

\section{Proof of Theorem~\ref{thm:algo}} \label{app:algo}

\begin{reptheorem}{thm:algo}
\thmAlgo
\end{reptheorem}

\begin{proof}
    Consider an input vector $X$ with $\supp(X) = \{i_1, i_2, \ldots, i_s\}$
    where $i_1 < i_2 < \ldots < i_s$. 
    We can obtain a binary string of length $s = |\supp(X)|$ from $X$ by simply removing all $\perp$ from the vector. 
    In other words, the binary string of $X$ is 
    $X_{i_1}X_{i_2}\ldots X_{i_s}$. 
    For the rest of the proof, we use $X$ to refer to the input vector or its binary string representation depending on the context.

    We define $\pi_{i_a}(X)$ to be the mapping of the bit string of $X$ to a $(s-1)$-length bit string that we get by removing $X_{i_a}$. 
    That is, 
    \begin{align*}
        \pi_{i_a}(X_{i_1}X_{i_2}\ldots X_{i_a} \ldots X_{i_s}) = X_{i_1}X_{i_2}\ldots X_{i_{a-1}} X_{i_{a+1}} \ldots X_{i_s}. 
    \end{align*}

    Our algorithm determines the message sent by a party given the input vector $X$ to be $\pi_{i_a}(X)$, whereby $i_a$ is chosen in a way that guarantees that Charlie can compute the output correctly. 
    Note that, given $X$ and $\pi_{i_a}(X)$, the only bit values of $X$ that Charlie can not deduce with certainty is the value at index $i_a$, i.e., $X_{i_a}$.  
    Now, consider a valid input pair $\supp(X) = \{ i_1, \ldots, i_s\}$ and $\supp(Y) = \{j_1, \ldots, j_s\}$, where $\sigma = \supp(X) \cap \supp(Y)$. 
    Upon receiving the messages $\pi_{i_a}(X)$ and $\pi_{j_b}(Y)$ sent by Alice and Bob respectively, for some values $i_a$ and $j_b$, Charlie can deduce $X_\sigma$ and $Y_\sigma$ only if at least one of $i_a$ or $j_b$ is not $\sigma$. 

    In the following, we describe a protocol on how to choose the value $i_a \in \supp(X)$ so that Charlie can compute correctly: 
    We construct a partition of $\binom{[m]}{s}$ into blocks $\mathcal{R}_1, \ldots, \mathcal{R}_m$ such that for each $i \in [m]$: 
    \begin{itemize}
        \item[(a)] For all $I \in \mathcal{R}_i$, we have $i \in I$. 
        \item[(b)] For all $I, J$ in $\mathcal{R}_i$, it holds that $|I \cap J| \ge 2$.\footnote{The reader may notice that the set families  $\mathcal{R}_1, \ldots, \mathcal{R}_m$ are defined analogously as in the lower bound proof in Section~\ref{sec:overlap_lb}.} 
    \end{itemize}

    To see why Properties~(a) and (b) are sufficient, consider
    $I = \supp(X)$ and $J = \supp(Y)$ such that $I \cap J=\{\sigma\}$. 
    Let $i$ and $j$ are such that $I \in R_i$ and $J\in R_j$. 
    The message sent by Alice and Bob given input $X$ and $Y$ is thus $\pi_{i}(X)$ and $\pi_{j}(Y)$, respectively. 
    By the properties of the partition, it follows that $i \ne j$. 
    Hence, at least one of them is not equal to $\sigma$, which allow Charlie to deduce the correct value of $X_\sigma$ and $Y_\sigma$. %
    
    To complete the proof, we give a concrete construction of the blocks $\mathcal{R}_1, \ldots, \mathcal{R}_m$: 
    We partition $[m]$ into $\lceil\frac{m}{3}\rceil$ intervals, each of length $3$, except possibly one interval, where the length is either $1$ or $2$.
    We define $\Phi(b)$ to be the successor of $b$ in the interval containing $b$, following the cyclic order.  
    For instance, if the interval is $[b, b+1, b+2]$, then 
    \[
    \Phi(b+i) = b+ ((i+1)\bmod{3}),
    \] 
    for $i= 0, 1, 2.$
    Since we are considering cyclic order, the intervals are referred to as cycles.
    Note that if a cycle has only one element, say $a$, we set $\Phi(a)$ to be an arbitrary element different from $a$.
    
    Let $\mathcal{I} = \binom{[m]}{s}$. 
    We define $\mathcal{R}_i$ to be the set that contains all $I$ in $\mathcal{I}$ such that both $i$ and $\Phi(i)$ are in $I$, i.e., 
    $\mathcal{R}_i = \{I \in \mathcal{I} | \{i, \Phi(i)\} \subseteq I \}$.
    To ensure $\mathcal{R}_i$ and $\mathcal{R}_j$ are disjoint, we remove any $I$ that occurs in $\mathcal{R}_i \cap \mathcal{R}_j$ from one of them. 
    It is straightforward to verify that the resulting $\mathcal{R}_1, \ldots, \mathcal{R}_m$ indeed satisfy Properties~(a) and~(b).  
    
    It remains to show that $\bigcup_{i=1}^m \mathcal{R}_i =  \mathcal{I}$.   
    Consider $I \in \mathcal{I}$. 
    Recall that there are $s$ elements in $I$ and $\lceil\frac{m}{3}\rceil$ cycles. 
    Given that $s > \lceil\frac{m}{3}\rceil$  from the assumption, there must exists distinct elements $a$ and $b$ in $I$ such that they are in the same cycle. Hence $I$ is in some $\mathcal{R}_i$.    
\end{proof}

\bibliographystyle{alpha}
\bibliography{refs}
\end{document}